\newtheorem{thm}{Theorem}[section]
\newtheorem{cor}[thm]{Corollary}
\newtheorem{lem}[thm]{Lemma}
\newtheorem{prop}[thm]{Proposition}
\newtheorem{con}[thm]{Conjecture}
\theoremstyle{definition}
\newtheorem{defn}[thm]{Definition}	
\theoremstyle{remark}
\def\beq{\begin{eqnarray}}  
\def\eeq{\end{eqnarray}}  
\def\bsp{\begin{split}}  
\def\esp{\end{split}}
\newcommand{\mbold}[1]{\mbox{\boldmath{\ensuremath{#1}}}}
\def \bl {\mbox{{\mbold\ell}}}
\def \bn {\mbox{{\bf n}}}
\def \bm {\mbox{{\bf m}}}
\def \bT {\mbox{{\bf T}}}
\def \bX {\mbox{{\bf X}}}
\def \bY {\mbox{{\bf Y}}}
\def \bZ {\mbox{{\bf Z}}}
\def \bW {\mbox{{\mbold W}}}
\def\beq{\begin{eqnarray}}
\def\eeq{\end{eqnarray}}
\begin{document}

\title{$\mathcal{I}$-Preserving Diffeomorphisms of Lorentzian Manifolds}
\author{{\large\textbf{David Duncan McNutt$^{1}$  and Matthew Terje Aadne$^{1}$ }}
\vspace{0.3cm} \\ 
$^{1}$ Faculty of Science and Technology,\\
University of Stavanger, 
N-4036 Stavanger, Norway  \\
\vspace{0.3cm} \\
\texttt{david.d.mcnutt@uis.no,matthew.t.aadne@uis.no }}  
\date{\today}  
\maketitle  
\pagestyle{fancy}  
\fancyhead{} 
\fancyhead[EC]{}  
\fancyhead[EL,OR]{\thepage}  
\fancyhead[OC]{}  
\fancyfoot{} 

\begin{abstract}

We examine the existence of one parameter groups of diffeomorphisms whose infinitesimal generators annihilate all scalar polynomial curvature invariants through the application of the Lie derivative, known as $\mathcal{I}$-preserving diffeomorphisms. Such mappings are a generalization of isometries and appear to be related to nil-Killing vector fields, for which the associated Lie derivative of the metric yields a nilpotent rank two tensor. We show that the set of nil-Killing vector fields contains Lie algebras, although the Lie algebras may be infinite and can contain elements which are not $\mathcal{I}$-preserving diffeomorphisms. We then study  the curvature structure of a general Lorenztian manifold, or spacetime to show that $\mathcal{I}$-preserving diffeomorphism will only exists for $\mathcal{I}$-degenerate spacetimes and to determine when the $\mathcal{I}$-preserving diffeomorphisms are generated by nil-Killing vector fields. We identify necessary and sufficient conditions for the degenerate Kundt spacetimes to admit an additional $\mathcal{I}$-preserving diffeomorphism and conclude with an application to the class of Kundt spacetimes with constant scalar polynomial curvature invariants to show that a finite transitive Lie algebra of nil-Killing vector fields always exists for these spacetimes. 

\end{abstract}

\maketitle

\begin{section}{Introduction}
Unlike the Riemannian spaces where the set,  $\mathcal{I}$, of all scalar polynomial curvature invariants ($SPIs$): 
$$ \mathcal{I} = \{ R, R_{abcd} R^{abcd}, \ldots, R_{abcd;e} R^{abcd;e}, \ldots \}, $$
locally characterize the manifold completely, for the pseudo-Riemannian spaces there exists classes of manifolds which cannot be uniquely characterized locally by their $SPIs$. That is, for any such metric, ${\bf g}$, there exists a smooth (one parameter) deformation of the metric, ${\bf \tilde{g}}_{\tau}$, with ${\bf \tilde{g}}_{0} = {\bf g}$ and ${\bf \tilde{g}}_{\tau}$, $\tau >0$ not diffeomorphic to ${\bf g}$ yielding the same set $\mathcal{I}$, such a space is called {\it $\mathcal{I}$-degenerate} \cite{CSI4a, SHY2015}. 

In the case of a spacetime, i.e., a Lorentzian manifold, $(M, {\bf g})$, a more practical definition of $\mathcal{I}$-degeneracy can be stated in terms of the structure of the curvature tensor and its covariant derivatives. To discuss this, we must examine the effect of a boost on the null coframe $\{ \bn, \bl, \bm^i\}$, $\bl' = \lambda \bl,~~ {\bf n'} = \lambda^{-1} \bn,$ for which the components of an
arbitrary tensor, ${\bf T}$, of rank $n$ transform as
\beq T'_{a_1 a_2...a_n} = \lambda^{b_{a_1 a_2 ... a_n}} T_{a_1 a_2 ... a_n},~~ b_{a_1 a_2...a_n} = \sum_{i=1}^n(\delta_{a_i 0} - \delta_{a_i 1}),  \eeq
\noindent where $\delta_{ab}$ denotes the Kronecker delta symbol. The quantity, $b_{a_1 a_2 ... a_n}$, is called the {\it boost weight} (b.w) of the frame component $T_{a_1 a_2 ... a_p}$. Any tensor can be decomposed in terms of the b.w. of its components and this b.w. decomposition gives rise to the {\it alignment
  classification}, by identifying null directions relative to
which the components of a given tensor have a particular b.w. configuration. This classification reproduces the Petrov and Segre classifications in 4D, and also leads to a coarse classification in higher dimensions \cite{classa,classb,classc, OrtaggioPravdaPravdova:2013}. 

We will define the maximum b.w. of a tensor, ${\bf T}$, for a null direction $\bl$ as the boost order, and denote it as $\mathcal{B}_{{\bf T}}(\bl)$. The Weyl tensor and any rank two tensor, {\bf T}, can be broadly classified into five {\it alignment types}: {\bf G},{\bf I}, {\bf II}, {\bf III}, and {\bf N} if there exists an $\bl$ such that $\mathcal{B}_{{\bf T}} (\bl) = 2, 1, 0,-1,-2$ and we will say $\bl$ is ${\bf T}$-aligned, while if {\bf T} vanishes, then it belongs to alignment type {\bf O}. For higher rank tensors, like the covariant derivatives of the curvature tensor, the alignment types are still applicable despite the possibility that $|\mathcal{B}_{{\bf T}} (\bl)|$ may be greater than two. Any $\mathcal{I}$-degenerate spacetime admits a null frame such that all of the positive b.w. terms of the curvature tensor and its covariant derivatives are zero in this common frame, that is they are all of alignment type {\bf II}.

A significant subset of the $\mathcal{I}$-degenerate spacetimes are contained in a subclass of the Kundt spacetimes, for which the curvature tensors and its covariant derivatives are of alignment type {\bf II}, known as the {\it degenerate Kundt spacetimes}. In the three-dimensional (3D) and four-dimensional (4D) cases, all such spacetimes are contained in the degenerate Kundt spacetimes \cite{CHPP2009}. It is conjectured that any $D$-dimensional $\mathcal{I}$-degenerate spacetime is a degenerate Kundt spacetime \cite{CHP2010}. 

Of particular interest are those spacetimes where all elements of $\mathcal{I}$ vanish or are constant, such spacetimes are known as {\it vanishing scalar invariant} ($VSI$) or  {\it constant scalar invariant} ($CSI$) spacetimes respectively \cite{CSI4d}.
The class of CSI spacetimes are applicable to many theories of gravity, as they contain a subset of spacetimes that are universal, and hence solve the vacuum equations of all gravitational theories with a Lagrangian constructed from SPIs \cite{coleyhervik2011, hervik2015}.

In 3D and 4D, it has been shown that all $CSI$ spacetimes are either locally homogeneous or they belong to the degenerate Kundt class \cite{CSI4c, CSI4b}, while in higher dimensions it is conjectured that a $CSI$ spacetime will either be locally homogeneous or belong to the degenerate Kundt class \cite{CSI4d}. It has been shown that the $VSI$ spacetimes belong to the degenerate Kundt class in all dimensions \cite{Higher}. The subset of $CSI$ spacetimes belonging to the Kundt class are called {\it Kundt-CSI}. For Kundt-$CSI$ metrics, the transverse space is a locally homogeneous Riemannian manifold and the metric functions must satisfy particular differential equations \cite{CSI4c, CSI4b, CFH, SM2018}. 

In the Riemannian case, a space is $CSI$ if and only if the space is locally homogeneous. In the Lorentzian case there are Kundt-$CSI$ spacetimes that do not have enough Killing vector fields to determine the $CSI$ property. However, any Kundt-$CSI$ spacetime can be mapped to a related locally homogeneous Kundt-$CSI$ spacetime with the same set $\mathcal{I}$ which provides an explanation for the $CSI$ property \cite{SHY2015,SM2018}. Such a metric is known as a Kundt$^\infty$ triple and will be defined in section \ref{sec:KundtCSI}.  

The pseudo-Riemannian case admits $CSI$ metrics, with two known classes of metrics containing $CSI$ solutions, namely the Kundt and Walker pseudo-Riemannian metrics \cite{SHY2015}. Unlike the Lorentzian case, there exists $CSI$ pseudo-Riemannian spaces which are mapped to simpler $CSI$ spaces lacking a sufficient number of Killing vector fields required to prove the metrics are $CSI$. In such cases, all possible $SPIs$ up to an appropriate order must be checked explicitly to prove the $CSI$ property. As an example, consider the following neutral signature metric in $4D$:

\beq ds^2 = 2du (V du + dv) + dU (av^4 dU + dV), \label{neutralCSI} \eeq

\noindent where $a$ is a constant. Any $SPI$ constructed from the curvature tensor and its covariant derivative up to order $3$ all vanish, while all $SPIs$ constructed from the covariant derivatives of the curvature tensor of order $p \geq 3$ are constant. 

While this spacetime does not admit a sufficient number of Killing vector fields, it does admit a transitive set of vector fields, $$\left\{ \frac{\partial}{\partial u},~\frac{\partial}{\partial v},~\frac{\partial}{\partial U},~\frac{\partial}{\partial V} \right\}. $$
\noindent For each of these vector fields, the Lie derivative of the metric in the direction of the vector field  produces a nilpotent rank 2 tensor, that is, they are {\it nil-Killing} vector fields \cite{SH2018}. A subset of the nil-Killing vector fields known as Kerr-Schild vector fields have been studied as generators for Kerr-Schild transformations of spacetimes \cite{SC2000}. The Kerr-Schild vector fields have also been used to establish the existence of trapping horizons in 4D spacetimes \cite{Senov2003}. Generally the Kerr-schild vector fields are finite dimensional. However, in some cases the Kerr-Schild vector fields can form an infinite dimensional Lie algebra. 

In comparison, the four nil-Killing vector fields of the line-element \eqref{neutralCSI} form a finite abelian Lie algebra and the flows of each of the vector fields leave the elements of $\mathcal{I}$ invariant. Such a vector field generalizes the concept of an isometry by preserving SPIs without necessarily being an isometry of the metric, and so the corresponding flow of such a vector field is called an {\it $\mathcal{I}$-preserving diffeomorphism} ($IPD$). The associated vector fields of the $IPDs$ can help determine if a spacetime is $CSI$ without explicitly checking all SPIs \cite{MTA2018}. Motivated by this example, it is of interest to determine a simple criteria to identify nil-Killing vector fields which generate diffeomorphisms that preserve the set $\mathcal{I}$ for a given metric.

The paper is organized as follows. In section \ref{sec:nilkil}, we determine the general form of a nilpotent self-adjoint operator and relate the choice of frame basis to a preferred null direction, to give a more precise definition for the nil-Killing vector fields. We also show that the nil-Killing vector fields that generalize the Kerr-Schild vector fields form a Lie algebra, and that other Lie algebras can potentially exist depending on the choice of additional conditions for the nil-Killing vector fields. In section \ref{sec:ExistIPD}, we examine the structure of the curvature invariants for a generic spacetime to determine the existence of $IPDs$ and show they can only exist in $\mathcal{I}$-degenerate spacetimes \cite{CHPP2009}. In section \ref{sec:IPDNK}, we employ a frame based approach to determine when a nil-Killing vector field gives rise to an $IPD$ and whether $IPDs$ exist whose infinitesimal generators are not nil-Killing vector fields. In section \ref{sec:IPDKundt}, we consider a general degenerate Kundt spacetime and establish conditions that must be satisfied in order to admit an additional $IPD$. In section \ref{sec:KundtCSI}, we apply the results of section  \ref{sec:IPDKundt} to the Kundt-$CSI$ spacetimes and prove a finite transitive Lie algebra of nil-Killing vector fields which generate $IPDs$ always exists. We summarize our results in section \ref{sec:conclusion} and discuss the existence of $IPDs$ for $\mathcal{I}$-degenerate pseudo-Riemannian manifolds of different signatures. 
 
\end{section}

\begin{section}{Nilpotent Operators and Nil-Killing Vectors} \label{sec:nilkil}

In this section we will introduce some general results about nilpotent operators and relate these results to the alignment classification \cite{classa,classb,classc}, in order to give a more precise definition of a nil-Killing vector field.

\begin{prop} \label{prop:matthew1}
For a spacetime, $(M, {\bf g})$, given ${\bf T}: T_p M \to T_p M$, a self-adjoint endomorphism at an arbitrary point $p \in M$, then
\vspace{3 mm} 

\begin{enumerate}
\item ${\bf T}^2 = 0$ if and only if there exists a null vector, $\bl$, such that ${\bf T}(\{ \bl \}^{\perp} ) = 0$ where $\{ \bl \}^\perp$ denotes the orthogonal vector space to $\bl$. 
\item $\bT^3 =0$ if and only if there exists a null vector $\bl$ such that $\bT( T_p M) \subset \{ \bl \}^\perp$ and $\bT( \{ \bl \}^\perp) \subset \mathbb{R} \bl$. 
\item $\bT$ is nilpotent if and only if $\bT^3 = 0$. 
\end{enumerate}

\end{prop}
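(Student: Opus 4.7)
The argument hinges on two structural facts special to Lorentzian signature: (a) for any self-adjoint endomorphism $\bT$, non-degeneracy of $g$ gives $\ker(\bT) = \mathrm{Im}(\bT)^\perp$; and (b) any totally isotropic subspace of $T_pM$ has dimension at most one, equivalently, two orthogonal null vectors must be proportional. My plan is to record these two observations at the outset and then bootstrap them to obtain all three items without ever passing to a specific frame.

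For part 1, the backward direction is immediate: if $\bT$ vanishes on the hyperplane $\{\bl\}^\perp$, self-adjointness forces $\mathrm{Im}(\bT)\subset(\{\bl\}^\perp)^\perp=\mathbb{R}\bl$, and since $\bl\in\{\bl\}^\perp$ this yields $\bT^2=0$. Conversely, $\bT^2=0$ means $\mathrm{Im}(\bT)\subset\ker(\bT)=\mathrm{Im}(\bT)^\perp$, so $\mathrm{Im}(\bT)$ is totally isotropic. By (b) it is either $\{0\}$ or $\mathbb{R}\bl$ for some null $\bl$, and in either case $\ker(\bT)\supset\{\bl\}^\perp$.

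For part 2, the backward direction is a short chase: self-adjointness together with $\bT(T_pM)\subset\{\bl\}^\perp$ yields $g(\bT\bl,X)=g(\bl,\bT X)=0$ for every $X$, hence $\bT\bl=0$, and the two inclusions then give $\bT^3(T_pM)\subset\bT^2(\{\bl\}^\perp)\subset\bT(\mathbb{R}\bl)=0$. For the forward direction, I would apply part 1 to the self-adjoint operator $\bT^2$, which satisfies $(\bT^2)^2=\bT^4=\bT\cdot\bT^3=0$; this produces either $\bT^2=0$ (reducing to part 1) or a null $\bl$ with $\mathrm{Im}(\bT^2)=\mathbb{R}\bl$ and $\ker(\bT^2)=\{\bl\}^\perp$. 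From $\bT^3=0$ we have $\mathrm{Im}(\bT^2)\subset\ker(\bT)$, so $\bT\bl=0$, and self-adjointness then gives $\mathrm{Im}(\bT)\subset\{\bl\}^\perp$. For the second inclusion, any vector in $\bT(\{\bl\}^\perp)=\bT(\ker(\bT^2))$ lies in $\mathrm{Im}(\bT)\cap\ker(\bT)=\mathrm{Im}(\bT)\cap\mathrm{Im}(\bT)^\perp$, a totally isotropic subspace containing $\bl$, hence equal to $\mathbb{R}\bl$ by (b).

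For part 3, one direction is trivial. For the other, let $k$ be the nilpotency index of $\bT$ and assume for contradiction that $k\geq 4$. Applying part 1 to $\bT^2$ (which then satisfies $(\bT^2)^2=0$) makes $\mathrm{Im}(\bT^2)$ a one-dimensional null line $\mathbb{R}\bl$. Then $\bT\bl\in\mathrm{Im}(\bT^3)\subset\mathrm{Im}(\bT^2)=\mathbb{R}\bl$, so $\bT\bl=c\bl$; nilpotency forces $c=0$, whence $\bT^3=\bT\circ\bT^2=0$, contradicting $k\geq 4$. The principal obstacle throughout is isolating the Lorentzian constraint (b): once that one-dimensional-isotropic bound is in hand, the rest is algebraic manipulation of kernels and images via self-adjointness.
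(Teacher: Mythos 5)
Your parts 1 and 2 are correct, and the frame-free organization via $\ker(\bT)=\mathrm{Im}(\bT)^\perp$ together with the one-dimensionality of totally isotropic subspaces is essentially the same mechanism as the paper's proof but cleaner; in particular you make explicit the step that all of $\mathrm{Im}(\bT)$ lies on a \emph{single} null line, which the paper's proof of part 1 passes over silently, and your identification of $\bT(\{\bl\}^\perp)$ inside $\mathrm{Im}(\bT)\cap\mathrm{Im}(\bT)^\perp$ in part 2 is a nice shortcut.

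Part 3, however, has a genuine gap at its very first step. From ``$k\geq 4$'' you cannot conclude $(\bT^2)^2=\bT^4=0$: that identity holds precisely when $k\leq 4$, so your argument only treats the case $k=4$ and says nothing about a putative index $k\geq 5$ (for a general nilpotent operator, e.g.\ a single Jordan block of size $5$, $\bT^5=0$ does not force $\bT^4=0$; it is exactly the content of part 3 that such indices cannot occur for self-adjoint $\bT$, so you may not assume it). Two repairs are available. (i) Run your own argument with $\bT^{\lceil k/2\rceil}$ in place of $\bT^2$: this operator is nonzero and squares to zero, so its image is a null line $\mathbb{R}\bl$; then $\bT\bl$ spans $\mathrm{Im}(\bT^{\lceil k/2\rceil+1})\subset\mathbb{R}\bl$, so $\bT\bl=c\bl$ with $c=0$ by nilpotency, giving index at most $\lceil k/2\rceil+1$, which is $<k$ whenever $k\geq 4$; iterating drives the index down to $3$. (ii) Follow the paper: apply part 1 to $\bT^{k-1}$ to obtain a null $\bl$ with $\bT\bl=0$ and $\mathrm{Im}(\bT)\subset\{\bl\}^\perp$, and observe that $\bT$ then induces a nilpotent self-adjoint operator on the positive-definite quotient $\{\bl\}^\perp/\mathbb{R}\bl$, which must vanish (the paper phrases this as the symmetric spatial block $\mathbf{S}$ of the matrix of $\bT$ satisfying $\mathbf{S}^n=0$ and hence $\mathbf{S}=0$); then $\bT(\{\bl\}^\perp)\subset\mathbb{R}\bl$ and your part 2 gives $\bT^3=0$. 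Either way, the missing ingredient is some use of positive-definiteness transverse to $\bl$, or an index-reduction argument, beyond a single application of part 1 to $\bT^2$.
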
 
\vspace{1 mm}
\begin{proof} \begin{enumerate}\item Supposing that $\bX \in T_p M$, then $g( \bT\bX, \bT\bX) = g(\bT^2 \bX, \bX) = 0$, and so $\bT\bX$ is a null vector with $\bT\bX \propto \bl$ for some null vector. If ${\bf W} \in \{ \bl \}^\perp$ and $\bZ \in T_p M$, we can write $\bT\bZ = c \bl$ for some constant $c \in \mathbb{R}$, then $${\bf g}(\bT\bW, \bZ) = {\bf g}(\bW,\bT\bZ) = {\bf g}({\bf W}, c\bl) = 0,$$ therefore $\bT{\bf W} = 0$ and $\bT ( \{ \bl \}^\perp) = 0$. To show the other direction, suppose that $\bZ \in T_p M$ and ${\bf W} \in \{ \bl \}^\perp$ then ${\bf g}(\bZ, \bT{\bf W}) = 0$, and so $\bT( T_p M) \subset \mathbb{R} \bl$ which implies $\bT^2 = 0$.

\item We will assume $\bT^2 \neq 0$ and $\bT^3 = 0$ to avoid the first part of the proof. Using the fact that $(\bT^2)^2 = 0$, there must be some null vector $\bl \in T_p M$ such that $\bT^2( \{ \bl \}^\perp )=0$. Given $\bZ \in T_p M$, $${\bf g}(\bT^2 \bZ, {\bf W}) = {\bf g}(\bZ, \bT^2 {\bf W}) = 0,~ \forall~ {\bf W} \in \{ \bl \}^\perp .$$
It follows that $\bT^2 ( T_p M) = \mathbb{R} \bl$, and since $\bT^3 =0$, it is necessary that $\bT \bl = 0$ and $\bT( T_p M) \subset\{ \bl \}^\perp$ since $${\bf g}(\bT\bZ, \bl) = {\bf g}(\bZ, \bT\bl) = 0, \forall~ \bZ \in T_p M.$$ If ${\bf W} \in \{ \bl \}^\perp$ then, $g(\bT^2{\bf W}, {\bf W}) = {\bf g}(\bT{\bf W},\bT{\bf W}) =0$, so that $\bT: \{ \bl \}^\perp \to \{ \bl \}^\perp$ and  $\bT( \{ \bl \}^\perp) = \mathbb{R} \bl$. To prove the other direction, we note that $\bT( T_p M) \subset \{ \bl \}^\perp$ and so ${\bf g}(\bT\bZ, \bl) = {\bf g}(\bZ, \bT\bl) = 0, \forall~ \bZ \in T_p M$ implying $\bT\bl = 0$, and hence $\bT^3 ( T_p M) \subset \bT^2 ( \{ \bl \}^\perp) \subset \mathbb{R} \bT\bl = 0.$

\item Supposing that $\bT^n = 0$ with $\bT^{n-1} \neq 0$ for some $n \geq 3$, then $(\bT^{(n-1})^2 = 0$ and from $(1)$ there is a null-vector $\bl$ such that $\bT^{n-1} ( \{ \bl \}^\perp) =0$. If $\bT(T_p M) \not\subset \{ \bl \}^\perp$,  then there is some non-zero $\bZ \in T_p M$ for which $\bT\bZ \not\in \{ \bl \}^\perp$ giving the identity,
\beq \bT^{(n-1)} (\bT\bZ) = \bT^n \bZ = 0. \nonumber \eeq
\noindent This implies $\bT^{n-1} = 0$ which is a contradiction and so $\bT\b (T_p M) \subset \{ \bl \}^\perp$. Using this fact and ${\bf g}(\bT\bl, \bZ) = {\bf g}(\bl, \bT\bZ) = 0$ implies that $\bT\bl = 0$.

With $\bl$ we can construct a null coframe, $\{\bn, \bm^i,\bl\}$ so that the metric is of the form
\beq {\bf g} = 2 \bl \bn+ \delta_{ij} \bm^i \bm^j\nonumber \eeq
\noindent The self-adjoint operator $\bT$ with $\bT \bl = 0$ will have the matrix representation 
\beq \bT= \left[ \begin{array}{ccc} 0 & {\bf v}^T & c \\ \vdots & {\bf S} & {\bf v} \\ 0 &  \hdots & 0 \end{array} \right] \nonumber \eeq
\noindent where ${\bf v}$ is a $(n-2)$-dimensional vector, ${\bf S}$ is a symmetric $(n-2) \times (n-2)$ matrix and $c$ is real-valued. For any power $k$ of $\bT$, there is some vector ${\bf v}_k$ and real number $c_k$ such that 

\beq \bT^k = \left[ \begin{array}{ccc} 0 & {\bf v}_k^T & c_k \\ \vdots & {\bf S}^k & {\bf v}_k \\ 0 & \hdots & 0 \end{array} \right] \nonumber \eeq
\noindent If $\bT^n = 0$ then ${\bf S}^n = 0$, since ${\bf S}$ is symmetric and is spanned by tensor products of spatial vectors, this implies ${\bf S}=0$ and so for any element of $\{ \bl \}^\perp$, $\bT\{ \bl \}^\perp \subset \mathbb{R} \bl$, giving $\bT^3 = 0$.  We note that if there are two linearly independent null vectors with property $(1)$ or $(2)$ then $\bT=0$. 

\end{enumerate}

\end{proof}

From Proposition \ref{prop:matthew1}, we can give the following definition that motivates the use of the alignment classification. 

\begin{defn}
A self-adjoint endomorphism $\bT$ of the tangent space $T_p M$ of a spacetime is nilpotent with respect to a null vector $\bl$ if $\bT(T_p M) \subset \{ \bl \}^\perp$ and $\bT \bl = 0$. For a particular null vector $\bl$, the collection of self-adjoint nilpotent with respect to $\bl$ will be denoted as $ S_{\bl}( T_p M, {\bf g})$ 
\end{defn}

\noindent This idea can be extended to a symmetric rank two tensor field on a spacetime through the identity 
\beq \bT(\bX,\bY) = {\bf g}(\hat{\bT}\bX, \bY),~~ \forall~ \bX,~ \bY \in  \mathfrak{X}(M), \nonumber \eeq
\noindent such that $\forall ~p \in M$ the endomorphism of $T_pM$, $\hat{\bT}$, is self-adjoint with respect to the metric. If $\bT$ is nilpotent at a point $p \in M$, then there is a corresponding null vector field $\bl \in T_p M$ for which $\bT|_{p}$ can be decomposed in terms of elements of $\{ \bl\}^\perp$. That is, $\bT$ is nilpotent with respect to $\bl$ if $\hat{\bT}$ is nilpotent with respect to $\bl$ which is equivalent to $\bT(\bl, \bZ)=0,~~\forall~ \bZ \in T_p M$ and $\bT({\bf W}, \tilde{{\bf W}})=0,~\forall~ {\bf W}, \tilde{{\bf W}} \in \{ \bl \}^\perp$.  

Due to the smoothness of the manifold, this can be extended in a neighbourhood $U$ of $p \in M$, and so we say a symmetric rank two tensor-field is nilpotent with respect to a null vector field $\bl$, if $\bT|_{p}$ is nilpotent with respect to $\bl |_p$ for all $p \in U$. Completing the null coframe with $\bl$ as a basis element, $\{ \bn, \bl, \bm^i \}$, any nilpotent rank 2 tensor with respect to $\bl$ can be written as

\beq \bT = T_{11} \bl  \bl + 2 T_{1i} \bl \bm^i. \label{nilnull} \eeq

With this definition, we can define a more precise definition of a nil-Killing  vector field \cite{SH2018}: 

\begin{defn}
For a spacetime $(M, {\bf g})$, a vector field $\bX \in \mathfrak{X}(M)$ is nil-Killing with respect to $\bl$ if $\mathcal{L}_{\bX} {\bf g} \in S_{\bl} (T_p M, {\bf g})$.
\end{defn}

\noindent Note that in the Riemannian case, this can only occur if $\bX$ is Killing. In the literature, a specialization of the nil-Killing vector fields known as the Kerr-Schild vector fields, has been discussed  \cite{SC2000, Senov2003}  these are defined as nil-Killing vector fields with respect to $\bl$ for which $\mathcal{L}_{\bX} {\bf g} = {\bf T}$ is  nilpotent of order two, ${\bf T}^2 = 0$ with the additional condition: \beq [\bX, \bl] \propto \bl. \label{KScond} \eeq

\noindent This additional condition allows nil-Killing vector fields to act as automorphisms on $S_{\bl} (T_p M, {\bf g})$.

\begin{lem} \label{lem:matthew1}
Given a non-vanishing null vector field $\bl$ in a spacetime $(M, {\bf g})$ and a nil-Killing vector field, $\bX$, with respect to $\bl$  satisfying $$[\bX, \bl] = f \bl, f \in C^\infty (M).$$ If $\bT \in S_{\bl} (T_p M, {\bf g})$ then $\mathcal{L}_{\bX} \bT \in S_{\bl} (T_p M, {\bf g})$. 
\end{lem}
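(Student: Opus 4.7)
The approach is to verify directly, via the product rule
$(\mathcal{L}_\bX\bT)(\bY,\bZ) = \bX(\bT(\bY,\bZ)) - \bT([\bX,\bY],\bZ) - \bT(\bY,[\bX,\bZ])$,
that $\mathcal{L}_\bX\bT$ still satisfies the two characterising conditions of $S_{\bl}(T_pM,\bg)$ stated immediately before the lemma: namely, $(\mathcal{L}_\bX\bT)(\bl,\bZ)=0$ for every $\bZ$, and $(\mathcal{L}_\bX\bT)(\bW_1,\bW_2)=0$ for every $\bW_1,\bW_2 \in \{\bl\}^\perp$.

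The $\bl$-slot check is almost automatic. In the expansion of $(\mathcal{L}_\bX\bT)(\bl,\bZ)$ the first and third terms drop out because $\bT(\bl,\cdot)=0$, while the middle term simplifies to $\bT(f\bl,\bZ) = f\bT(\bl,\bZ) = 0$ by the hypothesis $[\bX,\bl]=f\bl$. For the $\{\bl\}^\perp$-pair check, $\bT(\bW_1,\bW_2)=0$ by hypothesis, so only the two bracket terms survive. The crucial intermediate step would be to establish that $[\bX,\bW]\in\{\bl\}^\perp$ whenever $\bW\in\{\bl\}^\perp$; once this stability is in hand, both surviving bracket terms vanish because $\bT$ annihilates all pairs drawn from $\{\bl\}^\perp$.

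This stability of $\{\bl\}^\perp$ under $\mathrm{ad}_\bX$ is where the nil-Killing hypothesis does its work. Expanding $(\mathcal{L}_\bX\bg)(\bW,\bl)$ by the same product rule, the terms $\bX(\bg(\bW,\bl))$ and $\bg(\bW,[\bX,\bl])=f\bg(\bW,\bl)$ both vanish because $\bW\in\{\bl\}^\perp$, leaving $-\bg([\bX,\bW],\bl)$. Since $\bl\in\{\bl\}^\perp$ and $\mathcal{L}_\bX\bg \in S_{\bl}$, the left-hand side is zero as well, which gives $\bg([\bX,\bW],\bl)=0$ as required. The main obstacle I anticipate is organisational rather than computational: one must invoke both hypotheses in tandem---the bracket condition $[\bX,\bl]=f\bl$ to clean up the $\bl$-slot terms and the drop-out term in the stability argument, and the full force of the nil-Killing assumption to secure the stability of $\{\bl\}^\perp$ itself---rather than relying on either in isolation.
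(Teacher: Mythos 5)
Your proposal is correct and follows essentially the same route as the paper's own proof: both establish the stability $[\bX,{\bf W}]\in\{\bl\}^\perp$ from the expansion of $\mathcal{L}_{\bX}{\bf g}(\bl,{\bf W})$ together with $[\bX,\bl]=f\bl$, and then verify the two defining conditions of $S_{\bl}(T_pM,{\bf g})$ term by term via the Lie-derivative product rule. No changes needed.
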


\begin{proof}  
Suppose that ${\bf W} \in \{ \bl \}^\perp$, then the conditions that $\bX$ is nil-Killing in Proposition \ref{prop:matthew1} implies that $\mathcal{L}_{\bX} {\bf g} (\bl, \{ \bl\}^\perp) =0$ and $\mathcal{L}_{\bX} {\bf g} (\{ \bl\}^\perp, \{ \bl\}^\perp) = 0$ (we have made a minor abuse of notation to treat $\mathcal{L}_{\bX} {\bf g}$ as the corresponding nilpotent operator), the condition \eqref{KScond}  implies that $[\bX, {\bf W}] \in \{\bl \}^\perp$ since 

\beq  0 = \mathcal{L}_{\bX} {\bf g}(\bl, {\bf W}) = - {\bf g}([\bX,\bl],{\bf W}) - {\bf g}(\bl, [\bX, {\bf W}]) = - {\bf g}(\bl, [\bX, {\bf W}]) . \nonumber \eeq
\noindent For any $\bZ \in \mathfrak{X}(M)$ and  ${\bf W}, \tilde{{\bf W}} \in \{\bl \}^\perp$ this implies

\beq \mathcal{L}_{\bX} \bT( \bl, \bZ) = \bX(\bT(\bl, \bZ)) - \bT([\bX,\bl], \bZ) - \bT(\bl, [\bX, \bZ]) = 0, \eeq
\noindent  and 
\beq \mathcal{L}_{\bX} \bT( {\bf W}, \tilde{{\bf W}}) = \bX(\bT({\bf W}, \tilde{{\bf W}})) - \bT([\bX, {\bf W}], \tilde{{\bf W}}) - \bT({\bf W}, [\bX,\tilde{{\bf W}}]) = 0. \eeq
\noindent Therefore $\mathcal{L}_{\bX} \bT$ is also nilpotent with respect to $\bl$. 
\end{proof}

\noindent We can now show that the set of nil-Killing vector fields satisfying \eqref{KScond} form a Lie algebra and not just the Kerr-Schild vector fields.

\begin{prop} \label{prop:matthew2}
For any spacetime $(M, {\bf g})$ and $\bl$ a null vector field then
\beq \mathfrak{g}_{\bl} := \{ \bX \in \mathfrak{X}(M) | [\bX, \bl] \propto \bl, ~ \bX \text{ is nil-Killing with respect to } \bl \} \nonumber \eeq 
\noindent is a Lie algebra and 
\beq \mathfrak{h}_{\bl} := \{ \bX \in \{ \bl \}^\perp  | [\bX, \bl] \propto \bl, ~ \bX \text{ is nil-Killing with respect to } \bl \} \nonumber \eeq
\noindent is an ideal. 
\end{prop}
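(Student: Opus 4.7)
The plan is to verify the three closure properties of a Lie algebra directly, using Lemma \ref{lem:matthew1} as the main tool for preserving the nilpotent condition under Lie bracketing, and to exploit the defining identity of the Lie derivative for the ideal property.

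First I would confirm that $\mathfrak{g}_{\bl}$ is a real vector space. For $a, b \in \mathbb{R}$ and $\bX, \bY \in \mathfrak{g}_{\bl}$, linearity of the bracket gives $[a\bX + b\bY, \bl] = a[\bX,\bl] + b[\bY,\bl] \propto \bl$, and $\mathcal{L}_{a\bX+b\bY}\mathbf{g} = a\mathcal{L}_{\bX}\mathbf{g} + b\mathcal{L}_{\bY}\mathbf{g}$ remains nilpotent with respect to $\bl$ because the decomposition \eqref{nilnull} shows that $S_{\bl}(T_p M, \mathbf{g})$ is closed under addition pointwise.

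For closure under the bracket, suppose $[\bX,\bl] = f\bl$ and $[\bY,\bl] = h\bl$. Applying the Jacobi identity,
\begin{equation}
[[\bX,\bY],\bl] = [\bX,[\bY,\bl]] - [\bY,[\bX,\bl]] = \bigl(\bX(h) - \bY(f)\bigr)\bl,
\end{equation}
which is proportional to $\bl$. Combining the identity $\mathcal{L}_{[\bX,\bY]} = [\mathcal{L}_{\bX},\mathcal{L}_{\bY}]$ with Lemma \ref{lem:matthew1} applied to $\bT = \mathcal{L}_{\bY}\mathbf{g}$ and to $\bT = \mathcal{L}_{\bX}\mathbf{g}$ (both of which lie in $S_{\bl}(T_p M, \mathbf{g})$ by hypothesis), both $\mathcal{L}_{\bX}(\mathcal{L}_{\bY}\mathbf{g})$ and $\mathcal{L}_{\bY}(\mathcal{L}_{\bX}\mathbf{g})$ belong to $S_{\bl}(T_p M, \mathbf{g})$, hence so does their difference. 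Therefore $[\bX,\bY]$ is nil-Killing with respect to $\bl$ and lies in $\mathfrak{g}_{\bl}$.

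To show $\mathfrak{h}_{\bl}$ is an ideal, take $\bX \in \mathfrak{g}_{\bl}$ and $\bZ \in \mathfrak{h}_{\bl}$. The bracket $[\bX,\bZ]$ already lies in $\mathfrak{g}_{\bl}$ by the previous step, so it suffices to verify that $[\bX,\bZ] \in \{\bl\}^{\perp}$. Since $\mathcal{L}_{\bX}\mathbf{g}$ is nilpotent with respect to $\bl$, it annihilates any pair $(\bl, \cdot)$, and since $\bZ \in \{\bl\}^{\perp}$ and $[\bX,\bl] = f\bl$, expanding the Lie derivative yields
\begin{equation}
0 = \mathcal{L}_{\bX}\mathbf{g}(\bl,\bZ) = \bX(\mathbf{g}(\bl,\bZ)) - \mathbf{g}([\bX,\bl],\bZ) - \mathbf{g}(\bl,[\bX,\bZ]) = -\mathbf{g}(\bl,[\bX,\bZ]),
\end{equation}
so $[\bX,\bZ] \in \{\bl\}^{\perp}$, as required. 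The only potentially delicate step is ensuring that the nilpotent property survives the Lie bracket, but this is exactly the content of Lemma \ref{lem:matthew1}, so no further obstacle arises.
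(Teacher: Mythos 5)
Your proof is correct and follows essentially the same route as the paper: the Jacobi identity to preserve the condition $[\cdot,\bl]\propto\bl$, and the identity $\mathcal{L}_{[\bX,\bY]}{\bf g}=\mathcal{L}_{\bX}(\mathcal{L}_{\bY}{\bf g})-\mathcal{L}_{\bY}(\mathcal{L}_{\bX}{\bf g})$ combined with Lemma \ref{lem:matthew1} to preserve nilpotency. You also explicitly verify that $[\bX,\bZ]\in\{\bl\}^\perp$ for the ideal property, a step the paper asserts without computation, which is a welcome addition.
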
 

\begin{proof}
Suppose that $\bX, \bY \in \mathfrak{g}_{\bl}$, then by assumption $\mathcal{L}_{\bY} {\bf g}$ is nilpotent with respect to $\bl$, and so by Proposition \ref{prop:matthew1} $\mathcal{L}_{\bX} (\mathcal{L}_{\bY} {\bf g})$ is nilpotent since $[ \bX, \bl ] \propto \bl$. Repeating this argument with $\bX$ and $\bY$ switched gives another nilpotent operator, and the difference 
\beq \mathcal{L}_{[\bX,\bY]} {\bf g} = \mathcal{L}_{\bX} (\mathcal{L}_{\bY} {\bf g}) - \mathcal{L}_{\bY}( \mathcal{L}_{\bX} {\bf g}) \nonumber \eeq
\noindent must be nilpotent with respect to $\bl$ as well. While $[\bX,\bY]$ is nil-Killing with respect to $\bl$ the condition \eqref{KScond} must be preserved also. Supposing that $f_1, f_2 \in C^\infty (M)$ such that $[\bX,\bl] = f_1 \bl$ and $[\bY,\bl] = f_2 \bl$ the Jacobi identity gives
\beq [[\bX,\bY], \bl] &=& - [[\bY,\bl], \bX] - [[\bl, \bX], \bY] = [\bX, f_2 \bl] - [\bY, f_1 \bl] \nonumber \\ &=& (\bX(f_2) - \bY(f_1)) \bl,  \nonumber \eeq
\noindent therefore $[\bX,\bY] \in \mathfrak{g}_{\bl}$ and $\mathfrak{g}_{\bl}$ is a Lie algebra

Suppose now that $\bX \in \mathfrak{g}_{\bl}$ and $\bY \in \mathfrak{h}_{\bl}$, then $[\bX,\bY] \in \{\bl\}^\perp$ and so $[\bX,\bY] \in \mathfrak{h}_{\bl}$ implying that $\mathfrak{h}_{\bl}$ is an ideal. 
\end{proof}

\noindent If $\{\bl\}^\perp$ is integrable, the condition in equation \eqref{KScond} can be relaxed for $\mathfrak{h}_{\bl}$. This will be particularly important for the degenerate Kundt spacetimes which admit an integrable $\{\bl\}^\perp$ and cannot be uniquely characterized locally by their SPIs. For such spacetimes, we expect that a subset of the nil-Killing vector fields to give rise to $IPDs$ and that they should form a Lie algebra.

\begin{cor} \label{cor:matthew1}

For any spacetime $(M, {\bf g})$ and $\bl$ a null vector field, if $\{\bl\}^\perp$ is integrable then
\beq \mathfrak{h}_{\bl} := \{ \bX \in \{ \bl \}^\perp  | \bX \text{ is nil-Killing with respect to } \bl \}  \eeq

\noindent is a Lie algebra. 
\end{cor}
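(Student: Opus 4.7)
The plan is to reduce this corollary to Proposition \ref{prop:matthew2} by showing that under the integrability of $\{\bl\}^\perp$, any $\bX \in \{\bl\}^\perp$ which is nil-Killing with respect to $\bl$ automatically satisfies the extra condition $[\bX, \bl] \propto \bl$ required in Proposition \ref{prop:matthew2}. Once this is in hand, the set $\mathfrak{h}_{\bl}$ defined here coincides with the ideal $\mathfrak{h}_{\bl}$ constructed there, and closure under the Lie bracket (as well as the Jacobi identity) is inherited.

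For the crucial step, fix $\bX \in \mathfrak{h}_{\bl}$ and an arbitrary $\bW \in \{\bl\}^\perp$ and expand the component $(\mathcal{L}_{\bX} \bg)(\bl, \bW) = 0$ of the nil-Killing condition as
\[
0 \;=\; \bX(\bg(\bl, \bW)) \,-\, \bg([\bX, \bl], \bW) \,-\, \bg(\bl, [\bX, \bW]).
\]
The first term vanishes because $\bg(\bl, \bW) = 0$ on $\{\bl\}^\perp$. For the last term, integrability of $\{\bl\}^\perp$ together with $\bX, \bW \in \{\bl\}^\perp$ gives $[\bX, \bW] \in \{\bl\}^\perp$ by Frobenius, so $\bg(\bl, [\bX, \bW]) = 0$. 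What remains is $\bg([\bX, \bl], \bW) = 0$ for every $\bW \in \{\bl\}^\perp$, which forces $[\bX, \bl] \in (\{\bl\}^\perp)^\perp = \mathbb{R}\bl$ since $\bl$ is null.

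With this observation the corollary follows quickly. For $\bX, \bY \in \mathfrak{h}_{\bl}$, Frobenius gives $[\bX, \bY] \in \{\bl\}^\perp$, and since both $\bX$ and $\bY$ now meet the hypotheses of Proposition \ref{prop:matthew2}, that proposition provides that $[\bX, \bY]$ is nil-Killing with respect to $\bl$, so $[\bX,\bY] \in \mathfrak{h}_{\bl}$; Jacobi and bilinearity are inherited from $\mathfrak{X}(M)$. The only conceptual subtlety is the dualization $(\{\bl\}^\perp)^\perp = \mathbb{R}\bl$, which is standard for null directions in Lorentzian signature and is really the geometric reason why dropping the explicit assumption $[\bX, \bl] \propto \bl$ costs nothing once $\{\bl\}^\perp$ is integrable; the argument would not go through verbatim for a timelike or spacelike $\bl$.
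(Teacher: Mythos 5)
Your proposal is correct and follows essentially the same route as the paper: both arguments use integrability of $\{\bl\}^\perp$ to kill the $\bg(\bl,[\bX,\bW])$ term in $\mathcal{L}_{\bX}\bg(\bl,\bW)=0$, deduce $[\bX,\bl]\propto\bl$, and thereby identify this $\mathfrak{h}_{\bl}$ with the ideal of Proposition \ref{prop:matthew2}. Your version is slightly more explicit about the dualization $(\{\bl\}^\perp)^\perp=\mathbb{R}\bl$, but the substance is identical.
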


\begin{proof}
If $\{{\bl} \}^\perp$ is integrable, and $$\bZ \in \{ \bX \in \{ {\bl} \}^\perp  | \bX \text{ is nil-Killing with respect to } {\bl} \},$$ then $[\bZ,{\bf W}] \in \{{\bl}\}^\perp,~\forall~ {\bf W} \in \{{\bl}\}^\perp$. Since $\bZ$ is nil-Killing with respect to ${\bl}$ it follows that 
\beq 0= \mathcal{L}_{\bZ} {\bf g} ({\bl}, {\bf W}) = -{\bf g}([\bZ,\bl], {\bf W}) - {\bf g}(\bl, [\bX,{\bf W}]) = - {\bf g}([\bZ,\bl], {\bf W}),  \nonumber \eeq
\noindent and so $[\bZ,\bl] \propto \bl$ and $\bZ \in \mathfrak{h}_{\bl}$. The converse inclusion is trivial, and hence $\mathfrak{h}_{\bl}$ is a Lie algebra. 
\end{proof}

Alternatively, we can relax the condition in equation \eqref{KScond} and instead consider any nil-Killing vector field, $\bX$, for which $\bT =\mathcal{L}_{\bX} {\bf g}$ is nilpotent with respect to $\bl$ and $\bT^2 = 0$. We will say {\it $\bX$ is a nil-Killing vector field with respect to $\bl$ of order two}.

\begin{prop} \label{prop:matthew3}
Given a  spacetime $(M, {\bf g})$ and a non-vanishing null vector field $\bl$. If $\bX, \bY \in \mathfrak{X}(M)$ are nil-Killing with respect to $\bl$ of order two, then $[\bX,\bY]$ is nil-Killing with respect to $\bl$.
\end{prop}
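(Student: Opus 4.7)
The plan is to exploit the order-two hypothesis in its strongest form: by Proposition \ref{prop:matthew1}(1), the self-adjoint endomorphism associated with $\mathcal{L}_{\bX}{\bf g}$ annihilates the entire subspace $\{\bl\}^\perp$, and likewise for $\mathcal{L}_{\bY}{\bf g}$. Translated into bilinear form language this reads
\beq
(\mathcal{L}_{\bX} {\bf g})({\bf W}, \bZ) = (\mathcal{L}_{\bY} {\bf g})({\bf W}, \bZ) = 0, \quad \forall\, {\bf W} \in \{\bl\}^\perp,~ \bZ \in T_pM. \nonumber
\eeq
I would then verify the two conditions defining nilpotency with respect to $\bl$ for $\mathcal{L}_{[\bX,\bY]}{\bf g}$ directly, using the identity $\mathcal{L}_{[\bX,\bY]} = [\mathcal{L}_{\bX}, \mathcal{L}_{\bY}]$ and the Leibniz rule for the Lie derivative of a $(0,2)$-tensor.

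First I would compute $(\mathcal{L}_{\bX}\mathcal{L}_{\bY}{\bf g})(\bl, \bZ)$; it breaks into the three pieces $\bX((\mathcal{L}_{\bY}{\bf g})(\bl,\bZ))$, $(\mathcal{L}_{\bY}{\bf g})([\bX,\bl],\bZ)$, and $(\mathcal{L}_{\bY}{\bf g})(\bl, [\bX,\bZ])$. The first and third pieces vanish because $\bl \in \{\bl\}^\perp$. The second is the only one that requires any thought: it vanishes once I establish that $[\bX,\bl] \in \{\bl\}^\perp$. This in turn follows from applying $\bX$ to the identity ${\bf g}(\bl,\bl) \equiv 0$ and using the standard expansion to obtain
\beq
0 = (\mathcal{L}_{\bX}{\bf g})(\bl,\bl) + 2{\bf g}([\bX,\bl],\bl), \nonumber
\eeq
where the first summand already vanishes by the (mere) nil-Killing property. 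An analogous computation of $(\mathcal{L}_{\bX}\mathcal{L}_{\bY}{\bf g})({\bf W}, \tilde{\bf W})$ with ${\bf W}, \tilde{\bf W} \in \{\bl\}^\perp$ is strictly easier, since each of the three resulting pieces manifestly contains a slot in $\{\bl\}^\perp$.

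Interchanging $\bX$ and $\bY$ and subtracting then yields $\mathcal{L}_{[\bX,\bY]}{\bf g}(\bl,\bZ) = 0$ and $\mathcal{L}_{[\bX,\bY]}{\bf g}({\bf W},\tilde{\bf W}) = 0$, which is exactly the definition of $[\bX,\bY]$ being nil-Killing with respect to $\bl$. The only real obstacle is recognizing that $[\bX,\bl] \in \{\bl\}^\perp$; once that observation is in hand, the rest is bookkeeping with the Leibniz rule and the symmetry $\bX \leftrightarrow \bY$. Note that I do not expect $[\bX,\bY]$ to remain of order two in general, but the proposition only asserts the weaker conclusion.
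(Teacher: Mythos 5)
Your proposal is correct and follows essentially the same route as the paper's own proof: both reduce the order-two hypothesis via Proposition \ref{prop:matthew1}(1) to the vanishing of $\mathcal{L}_{\bX}{\bf g}$ and $\mathcal{L}_{\bY}{\bf g}$ on $\{\bl\}^\perp \times T_pM$, establish $[\bX,\bl],[\bY,\bl]\in\{\bl\}^\perp$ from $\mathcal{L}_{\bX}{\bf g}(\bl,\bl)=0$, and then expand $\mathcal{L}_{[\bX,\bY]}{\bf g}=\mathcal{L}_{\bX}\mathcal{L}_{\bY}{\bf g}-\mathcal{L}_{\bY}\mathcal{L}_{\bX}{\bf g}$ term by term. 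You merely spell out the Leibniz bookkeeping that the paper compresses into ``using these facts it follows,'' and your closing remark that $[\bX,\bY]$ need not remain of order two matches the paper's discussion after the proposition.
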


\begin{proof}
From Proposition \ref{prop:matthew1}, a vector field $\bZ$ is nil-Killing with respect to $\bl$ of order two if $$ \mathcal{L}_{\bZ} {\bf g} ({\bf W},{\bf P}),~~\forall~ {\bf W} \in \{ \bl\}^\perp,~{\bf P} \in \mathfrak{X}(M). $$
\noindent In addition for any nil-Killing vector field, $\bZ$, with respect to $\bl$, $[\bZ, \bl] \in \{ \bl \}^\perp$ since
\beq \mathcal{L}_{\bZ} {\bf g}(\bl, \bl) = -2 {\bf g}([\bZ,\bl], \bl) = 0. \nonumber \eeq
\noindent Using these facts it follows that 
\beq & \mathcal{L}_{[\bX,\bY]} {\bf g}(\bl, {\bf P}) = 0,~~\forall~ {\bf P} \in \mathfrak{X}(M), & \nonumber \\
&   \mathcal{L}_{[\bX,\bY]} {\bf g}({\bf W}, \tilde{{\bf W}}) = 0,~~\forall~ {\bf W},~\tilde{{\bf W}} \in \{ \bl \}^\perp & \nonumber \eeq
\noindent From Proposition \ref{prop:matthew1}, this implies that $[\bX,\bY]$ is nil-Killing with respect to $\bl$. 
\end{proof}

\noindent While $[\bX,\bY]$ is nil-Killing with respect to $\bl$ it may no longer be a nil-Killing vector field of order two, and so nil-Killing vector fields of this type do not form a Lie algebra without imposing additional conditions on the metric or the set of nil-Killing vector fields $\bX$ and $\bY$. For example, in the case of Kerr-Schild vector fields requiring $[\bX,\bl] \propto \bl$ and $[\bY,\bl] \propto \bl$ forces $[\bX,\bY]$ to be a nil-Killing vector field of order two and hence forms a Lie algebra. 

This suggest that there are other Lie algebras within the set of nil-Killing vector fields. It is of interest to determine if a condition can be imposed to produce a finite Lie algebra for the nil-Killing vector fields. We are primarily interested in determining a finite Lie algebra of nil-Killing vector fields that generate a transitive set of $IPDs$, as such we will employ our characterization of nil-Killing vector fields to determine when a nil-Killing vector field preserves the set $\mathcal{I}$. 

\end{section}

\begin{section}{Existences of $\mathcal{I}$-Preserving Diffeomorphisms} \label{sec:ExistIPD}

Due to our interest in $IPDs$, we would like to find all vector fields ${\bf X}$ such that $$ \mathcal{L}_{\bX} I = 0,~\forall~I \in \mathcal{I} $$
\noindent and which are {\bf not} Killing vector fields, we will call ${\bf X}$ an {\it $IPD$ infinitesimal generator}, or an {\it $IPD$ vector field}. In order to do so, we will employ an alternative set of invariants that locally characterize a spacetime uniquely: the {\it Cartan invariants}, $\mathcal{R}^q$, which are the components of the curvature tensor and its covariant derivatives relative to a particular frame determined by the Cartan-Karlhede algorithm. A review of the Cartan-Karlhede algorithm is outside the scope of the current paper, we will refer to Chapter 9 of \cite{kramer} for the 4D implementation of the algorithm and \cite{CKAHD, MMCetal2018} for a discussion of the algorithm in five and higher dimensions. An $IPD$ vector field exists when the set of Cartan invariants $\mathcal{R}^q$ has a larger rank (i.e., the number of functionally independent components) than $\mathcal{I}$, $$ rank ( \mathcal{R}^q) > rank ( \mathcal{I}).$$ 

This condition implies that the spacetime is not locally characterized uniquely by its $SPIs$. In 3D and 4D, such metrics must belong to the degenerate Kundt class and the curvature tensor and its covariant derivatives must be of type {\bf II} to all orders \cite{CHPP2009}. In higher dimensions it is conjectured that this will be the case as well. Denoting $[\mathcal{R}]_{b.w. 0} $ as the set of components of the curvature tensor and its covariant derivatives of b.w. zero, we can introduce an alternative criteria for the existence of $IPD$ vector fields for all $\mathcal{I}$-degenerate spacetimes using the alignment classification  without generating the entire set $\mathcal{I}$. 

\begin{thm} \label{thm:LorExist}
Relative to the basis determined by the Cartan-Karlhede algorithm, a spacetime admits a non-trivial $IPD$ vector field, ${\bf X}$, such that 
\beq \mathcal{L}_{\bX} \mathcal{I} = 0, \eeq 
if and only if 
the spacetime is of alignment type {\bf II} to all orders and 
$$ 0 \leq rank ( [\mathcal{R}^q]_{b.w. 0} ) < rank ( \mathcal{R}^q).$$
\noindent That is, the spacetime is $\mathcal{I}$-degenerate.
\end{thm}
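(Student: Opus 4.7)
The plan is to prove the equivalence by reducing both implications to the single inequality $rank(\mathcal{I}) < rank(\mathcal{R}^q)$, which is the Cartan--Karlhede criterion for $\mathcal{I}$-degeneracy. The key structural observation is that for a spacetime of alignment type \textbf{II} to all orders, every SPI is a polynomial in the boost-weight zero Cartan components alone. Indeed, positive b.w.\ components vanish in the aligned frame, and in any scalar contraction a negative b.w.\ factor can only contribute if paired with a positive b.w.\ factor so that the total b.w.\ is zero; with positive b.w.\ identically zero, every surviving term lies in $[\mathcal{R}^q]_{b.w.\ 0}$. Consequently, under the type \textbf{II} assumption, $rank(\mathcal{I}) = rank([\mathcal{R}^q]_{b.w.\ 0})$.

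For the forward direction, suppose $\bX$ is a non-trivial $IPD$ vector field. Since $\bX$ is not Killing by assumption, its flow preserves every SPI but cannot preserve the full set of Cartan invariants in the canonical Cartan--Karlhede frame. As Killing vector fields are precisely those that annihilate the Cartan invariants in this frame, one obtains $rank(\mathcal{I}) < rank(\mathcal{R}^q)$, so the spacetime is $\mathcal{I}$-degenerate. Invoking the characterization of $\mathcal{I}$-degenerate spacetimes from \cite{CHPP2009} (together with the conjectural higher-dimensional extension already recalled in the introduction), the curvature tensor and all of its covariant derivatives must be of alignment type \textbf{II}. Feeding this back into the structural observation turns the inequality into $rank([\mathcal{R}^q]_{b.w.\ 0}) < rank(\mathcal{R}^q)$.

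For the reverse direction, assume type \textbf{II} to all orders and $rank([\mathcal{R}^q]_{b.w.\ 0}) < rank(\mathcal{R}^q)$. The observation again yields $rank(\mathcal{I}) < rank(\mathcal{R}^q)$. Choose a finite collection of functionally independent SPIs realizing $rank(\mathcal{I})$; their common level set through a regular point $p$ of the Cartan--Karlhede stratification is a smooth submanifold of dimension $\dim M - rank(\mathcal{I})$, strictly larger than the isometry orbit through $p$, which has dimension $\dim M - rank(\mathcal{R}^q)$. A smooth vector field $\bX$ defined on a neighbourhood of $p$, tangent to the SPI-level set but transverse to the isometry orbits, must therefore exist. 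By construction $\bX$ annihilates the chosen finite set of SPIs, and by smooth functional dependence in the type \textbf{II} regime it annihilates every element of $\mathcal{I}$; since $\bX$ is not Killing, it is the required $IPD$ vector field.

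The principal obstacle is the final step of the reverse direction: one must argue that annihilation of a \emph{single} finite set of functionally independent SPIs forces annihilation of every element of the infinite set $\mathcal{I}$. In the type \textbf{II} setting this reduces to showing that every SPI is a smooth function of finitely many chosen SPIs on the regular stratum, which follows because the b.w.\ zero Cartan components parametrize all SPIs and the Cartan--Karlhede procedure terminates with finitely many functionally independent invariants. A secondary subtlety is ensuring smoothness of $\bX$ on an open neighbourhood rather than merely pointwise, handled by restricting attention to the open dense regular set of the Cartan--Karlhede stratification where the relevant ranks are locally constant.
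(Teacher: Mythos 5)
Your overall strategy matches the paper's: the forward direction runs through ``a non-trivial $IPD$ vector field preserves all SPIs but not all Cartan invariants, hence $\mathcal{I}$-degeneracy, hence alignment type \textbf{II} to all orders'', and the reverse direction through ``for type \textbf{II} tensors only the b.w.\ zero components survive complete contraction''. Your explicit construction in the reverse direction (a vector field tangent to the common level set of a maximal functionally independent set of SPIs but transverse to the isometry orbit, with the level set of dimension $D - rank(\mathcal{I})$ strictly containing the orbit of dimension $D - rank(\mathcal{R}^q)$) is considerably more detailed than the paper's one-sentence treatment, and it is consistent with the dimension count $m = I_q - rank([\mathcal{R}^q]_{b.w. 0})$ the paper records after the theorem; that part of your argument is sound, including your handling of the finite-versus-infinite issue for $\mathcal{I}$ on the regular stratum.

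The genuine gap is in your ``key structural observation''. The fact that every SPI of a type \textbf{II} spacetime is built from the b.w.\ zero components alone yields only $rank(\mathcal{I}) \leq rank([\mathcal{R}^q]_{b.w. 0})$; it does not give the equality you assert. The inequality actually needed in the forward direction is the opposite one, $rank([\mathcal{R}^q]_{b.w. 0}) \leq rank(\mathcal{I})$, i.e.\ that the b.w.\ zero Cartan components are themselves functionally determined by the SPIs. This is a substantive result --- Corollary II.11 of \cite{CHP2010}, which the paper invokes at exactly this step --- and it cannot be deduced from the direction you proved: a priori the b.w.\ zero components could carry more functional information than survives contraction. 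Without it, $rank(\mathcal{I}) < rank(\mathcal{R}^q)$ does not imply $rank([\mathcal{R}^q]_{b.w. 0}) < rank(\mathcal{R}^q)$. Your reverse direction is unaffected, since there only the inequality $rank(\mathcal{I}) \leq rank([\mathcal{R}^q]_{b.w. 0})$ is used. A secondary point: the passage from $\mathcal{I}$-degeneracy to type \textbf{II} at all orders is a theorem (Corollary 3.4 of \cite{Hervik2011}), not part of the conjecture about membership in the degenerate Kundt class, so your hedge about a ``conjectural higher-dimensional extension'' is unnecessary for this step.
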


\begin{proof}
If a non-trivial $IPD$, ${\bf X}$, exists then we may choose local coordinates where ${\bf X} = \frac{\partial}{\partial x}$ implying that the $SPIs$ are independent of $x$. We note that since we have assumed that ${\bf X}$ is a non-trivial $IPD$ vector field, there must exist some Cartan invariant which is dependent on the $x$ coordinate \cite{kramer}. As we cannot express the $x$ coordinate in term of $SPIs$, the $SPIs$ are unable to distinguish orbits of ${\bf X}$, and so they do not uniquely characterize the spacetime. Thus, the spacetime is necessarily $\mathcal{I}$-degenerate, and Corollary 3.4  in \cite{Hervik2011} implies that the curvature tensor and its covariant derivatives cannot be of alignment type {\bf I} or {\bf G} at any order. That is, the spacetime is at least of alignment type {\bf II} to all orders. For any such spacetime, the components of b.w. zero of the curvature tensor are determined by the $SPIs$ (Corollary II.11 in \cite{CHP2010}) and hence 
$$ 0 \leq rank ( [\mathcal{R}^q]_{b.w. 0} ) < rank ( \mathcal{R}^q).$$
\noindent 
The opposite direction follows from the fact that when constructing a complete contraction of any tensor of type {\bf II}, only the b.w. 0 components contribute to the resulting $SPI$.

\end{proof}

The dimension of the Lie group of isometries, $G$, can be computed from the Cartan-Karlhede algorithm using the formula:  $$ dim(G) = D - I_q + dim (H_q),$$
\noindent where $D$ is the dimension of the manifold, $q$ is the final iteration of the algorithm, $I_q$ is the number of functionally independent Cartan invariants and $H_q$ is the linear isotropy group. Motivated by this result we can determine the dimension, $m$, of the subset of the tangent space spanned by all non-trivial $IPD$ vector fields by taking the difference:
$$ m = I_q - rank ( [\mathcal{R}^q]_{b.w. 0} ).$$
\noindent For example, for a generic degenerate Kundt spacetime admitting no additional isometries, $m =1$, whereas for a $CSI$ spacetime that is not locally homogeneous $m = D$.

\end{section}

\begin{section}{The Nil-Killing Condition and $IPD$ Vector Fields} \label{sec:IPDNK} 

Assuming the spacetime is $\mathcal{I}$-degenerate, let us consider the condition introduced in \cite{SH2018} to study the set of $IPD$ vector fields, $$\mathcal{L}_{\bX} {\bf g} = {\bf N},$$
\noindent where ${\bf N}$ is a nilpotent rank two tensor. For any nilpotent operator, there is a related null direction, $\bl$, as illustrated in equation \eqref{nilnull}.  Let us choose $\bl$ as a coframe basis element, and complete the coframe basis $ \{ \mbold{\theta}^a\} = \{ \bn,\bl, \bm^i\}$ then we can consider the effect of a Lie derivative in the direction of ${\bf X}$ on the coframe basis:
\beq \begin{aligned} & \mathcal{L}_{\bX} \bl = A \bl + \tilde{A} \bn + B_i \bm^i, \\ & \mathcal{L}_{\bX} \bn = C \bl + \tilde{C} \bn + D_i \bm^i, \\ & \mathcal{L}_{\bX} \bm^i = E^i \bl +\tilde{E}^i \bn + F^i_{~j} \bm^j, \end{aligned} \label{action} \eeq

\noindent where the coefficients are functions of the coordinates. Imposing the condition that $\mathcal{L}_{\bX} {\bf g}$ is nilpotent implies that this symmetric tensor must only have non-zero components with negative b.w. which puts conditions on the coefficients

\beq \tilde{A} = 0,~\tilde{C} = -A,~ \tilde{E}^i = -B^i \text{ and } F_{ij} = -F_{ji}, \eeq

\noindent and so
\beq \begin{aligned} & \mathcal{L}_{\bX} \bl = A \bl + B_i \bm^i, \\ & \mathcal{L}_{\bX} \bn = C \bl - A  \bn + D_i \bm^i, \\ & \mathcal{L}_{\bX} \bm^i = E^i \bl - B^i \bn + F^i_{~j} \bm^j,~~F_{(ij)} = 0. \label{preipd} \end{aligned} \eeq

We will now focus our attention on nil-Killing vector fields such that $\mathcal{L}_{\bX} {\bf g}$ is nilpotent with respect to the vector field $\bl$ for which the Riemann tensor and its covariant derivatives are of type {\bf II} or higher \cite{CHPP2009}. Using abstract index notation briefly, we will consider the subset of these nil-Killing vector fields which also annihilate $SPIs$ constructed from an arbitrary rank two symmetric curvature tensor $R_{ab}$ (such as the Ricci tensor or $C_{abcd;e} C^{abcd;f}$ as two examples) with the simplest $SPI$, the contraction $$I = R^a_{~a}.$$ 

In general, for a degenerate Kundt spacetime, the Riemann tensor and its covariant derivatives are of alignment type {\bf II}, and so the Ricci and Weyl tensor are at least of alignment type {\bf II}. We will assume it is possible to construct at least one rank two tensor of alignment type {\bf II}. We note that this analysis will be restricted to degenerate Kundt spacetimes of  alignment type {\bf II}, {\bf III} and {\bf N} to all orders, and that the subclass of metrics which have alignment type {\bf D} to all orders, known as type ${\bf D}^k$ will be excluded. Such metrics are $\mathcal{I}$-degenerate but are characterized by their $SPIs$, although not uniquely, since $rank([\mathcal{R}]_{b.w. 0} ) = rank( \mathcal{R})$.

If $ \mathcal{L}_{\bf X} I = 0$, then the trace of $\mathcal{L}_{\bf X} R_{ab}$ is zero since the Lie derivative commutes with contraction.  In order to avoid the  possibility that $\mathcal{L}_{\bf X} R_{ab} $ could be trace-free for some choices of $R_{ab}$, we assume the following property holds: 
\begin{defn} 
A spacetime is {\it generic of type} {\bf II}, {\bf D}, {\bf III} or {\bf N} if the set of rank two curvature tensors spans the vector space of rank two tensors of alignment type {\bf II}, {\bf D}, {\bf III} or {\bf N} respectively. 
\end{defn}

\noindent For any $\mathcal{I}$-degenerate spacetime  which is generic of type {\bf II}, corollary II.11 in \cite{CHP2010} gives a necessary condition for the vanishing of the trace of $\mathcal{L}_{\bf X} R_{ab}$:
\beq [ \mathcal{L}_{\bf X} R_{ab} ]_{b.w. 0} = 0, \nonumber \eeq
\noindent or in standard notation,  $$[ \mathcal{L}_{\bf X} (R_{ab} \mbold{\theta}^a \mbold{\theta}^b)]_{b.w. 0} = 0.$$ That is,  $\mathcal{L}_{\bf X} R_{ab}$ is of type {\bf III} as all b.w. zero components must vanish.


Imposing the condition that $$\mathcal{L}_{\bf X} (R_{ab} \mbold{\theta}^a \mbold{\theta}^b)  = {\bf X}(R_{ab}) \mbold{\theta}^a \mbold{\theta}^b  + 2 R_{ab} \mbold{\theta}^a \mathcal{L}_{\bf X} \mbold{\theta}^b $$ is of type {\bf III}, we note that ${\bf X}(R_{ab}) \mbold{\theta}^a \mbold{\theta}^b$ will only contribute negative b.w. terms to this tensor sum. Therefore, we have additional conditions on the Lie derivative of the basis \eqref{preipd} by ${\bf X}$:
\beq \begin{aligned} & \mathcal{L}_{\bf X} \bl = A \bl, \\ & \mathcal{L}_{\bf X} \bn = C \bl - A  \bn + D_i \bm^i, \\ & \mathcal{L}_{\bf X} \bm^i = E^i \bl + F^i_{~j} \bm^j,~~F_{(ij)} = 0 \end{aligned}  \label{preipd0} \eeq

\noindent where $F_{ij}$ satisfies the supplemental condition that $R_{(i|j} F^j_{~k)} = 0$. This will hold for all symmetric rank two tensors that can be constructed from the curvature tensor and its covariant derivatives. Furthermore, since $\mathcal{L}_{\bf X} R_{ab}$ is of type {\bf III} and $R_{ab}$ is at least of type {\bf II}, any $SPI$ constructed from contractions of copies of $R_{ab}$ will vanish under $\mathcal{L}_{\bf X}$ due to the Liebnitz property and the fact that the Lie derivative of the tensor product, $\mathcal{L}_{\bf X} (R_{a b}  )$, must be of type {\bf III}. 

Repeating this analysis to tensors constructed from the curvature tensor and its covariant derivatives of higher rank yield no additional constraints. However, applying the analysis for those $\mathcal{I}$-degenerate spacetimes whose Riemann tensors and their covariant derivatives are of alignment type {\bf III} and {\bf N} gives the following result.

\begin{prop}
For any $\mathcal{I}$-degenerate spacetime which is at least  generic of type {\bf II}, suppose that ${\bf X}$ is a nil-Killing vector field with respect to the Riemann-aligned null vector field, $\bl$. If under exponentiation of {\bf X} all $SPIs$ constructed from the curvature tensor and its covariant derivatives are preserved, then $\mathcal{L}_{\bf X}$ produces the following transformation on the coframe basis: 

\begin{itemize} 
\item Alignment type {\bf II} and {\bf III}:
\beq \begin{aligned} & \mathcal{L}_{\bf X} \bl = A \bl, \\ & \mathcal{L}_{\bf X} \bn = C \bl - A  \bn + D_i \bm^i, \\ & \mathcal{L}_{\bf X} \bm^i = E^i \bl + F^i_{~j} \bm^j,~~F_{(ij)} = 0. \end{aligned} \label{LieDII} \eeq
\item Alignment type {\bf N}:
\beq \begin{aligned} & \mathcal{L}_{\bf X} \bl = A \bl + B_i \bm^i, \\ & \mathcal{L}_{\bf X} \bn = C \bl - A  \bn + D_i \bm^i, \\ & \mathcal{L}_{\bf X} \bm^i = E^i \bl - B^i \bn + F^i_{~j} \bm^j,~~F_{(ij)} = 0.  \end{aligned} \label{LieDN}  \eeq

\end{itemize}
\end{prop}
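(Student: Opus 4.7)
My plan is to refine \eqref{preipd} by using the condition $\mathcal{L}_{\bX} I = 0$ for every SPI to impose extra boost-weight constraints on the frame-Lie-derivative coefficients. The starting point is \eqref{preipd}, already established from the nil-Killing hypothesis; I want to compare the boost-weight decomposition of $\mathcal{L}_{\bX} R$ (for a suitable curvature-built rank-two tensor $R$) with what the generic-type assumption permits.

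The key step is to compute, via the Leibniz rule $\mathcal{L}_{\bX} R = \bX(R_{ab})\,\mbold{\theta}^a \mbold{\theta}^b + 2 R_{ab}\,\mbold{\theta}^a\, \mathcal{L}_{\bX} \mbold{\theta}^b$, the boost-weight-zero components of $\mathcal{L}_{\bX} R$ after substituting \eqref{preipd}. For $R$ of type {\bf II} I expect $(\mathcal{L}_{\bX} R)_{01} = \bX(R_{01}) - R_{1i} B^i$ and $(\mathcal{L}_{\bX} R)_{ij} = \bX(R_{ij}) + 2 R_{1(i} B_{j)} + 2 R_{k(i} F^k{}_{j)}$, with the analogous type {\bf III} formulas following by restricting to $R_{01} = R_{ij} = 0$. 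By Corollary II.11 of \cite{CHP2010}, applied in the generic setting, preservation of all SPIs requires $[\mathcal{L}_{\bX} R]_{b.w.\ 0} = 0$ for every rank-two curvature tensor; in types {\bf II} and {\bf III} one can vary $R_{1i}$ independently inside the span of such tensors, and the vanishing of its coefficient $-B^i$ forces $B^i = 0$. Combined with the antisymmetry $F_{(ij)} = 0$ already present in \eqref{preipd}, this converts \eqref{preipd} into \eqref{LieDII}.

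For alignment type {\bf N} the tensor $R = R_{11}\,\bl\,\bl$ carries only a b.w.\ $-2$ component, and a short Leibniz calculation gives $\mathcal{L}_{\bX} R = (\bX(R_{11}) + 2 R_{11} A)\,\bl\,\bl + 2 R_{11} B_i\,\bl\,\bm^i$, whose b.w.\ 0 part vanishes automatically and imposes no constraint on $B_i$. The same count applies to any higher-rank type {\bf N} tensor, which takes the form $T_{11\dots 1}\,\bl^{\otimes k}$ with every Leibniz term of boost weight at most $-1$. Hence \eqref{preipd} is left unmodified in this case and we arrive at \eqref{LieDN}.

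The main obstacle I expect is confirming that higher-rank type {\bf II} and type {\bf III} curvature tensors contribute no constraints beyond $B^i = 0$. The cleanest way to handle this is a boost-weight parity observation: in \eqref{preipd} the only coefficient that raises boost weight is $B^i$, which appears in the $\bn$-component of $\mathcal{L}_{\bX} \bm^i$. For any type {\bf II} or type {\bf III} tensor $T$, any b.w.\ 0 contribution to $\mathcal{L}_{\bX} T$ must therefore carry at least one factor of $B^i$ and is annihilated once $B^i = 0$, so the rank-two analysis is already exhaustive.
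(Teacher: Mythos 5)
Your proposal follows the same route as the paper's own derivation (which is carried out in the prose preceding the proposition): start from \eqref{preipd}, invoke Corollary II.11 of \cite{CHP2010} together with genericity to force $[\mathcal{L}_{\bX} R_{ab}]_{b.w.\,0}=0$, expand via Leibniz, and read off $B^i=0$ from the coefficient of $R_{1i}$ in the b.w.\ zero components; your computed expressions for $(\mathcal{L}_{\bX}R)_{01}$ and $(\mathcal{L}_{\bX}R)_{ij}$ agree with what the paper's substitution of \eqref{preipd} yields, and the type {\bf N} case is handled identically. The one inaccurate step is the ``parity observation'' in your final paragraph: it is not true that every b.w.\ zero contribution to $\mathcal{L}_{\bX}T$ for a type {\bf II} tensor carries a factor of $B^i$, since the b.w.\ zero components of $T$ themselves feed into $\mathcal{L}_{\bX}T$ at b.w.\ zero through $\bX(T_{b.w.\,0})$ and through the $A$- and $F$-terms (the $A$-contributions cancel pairwise on b.w.\ zero components, but the $F$-contributions such as $2R_{k(i}F^k{}_{j)}$ survive). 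This does not damage the proposition, whose conclusion only asserts the structural form \eqref{LieDII} with $F$ antisymmetric: the surviving terms constrain $F$ (this is exactly the paper's supplemental condition $R_{(i|j}F^j{}_{k)}=0$, which the paper records separately rather than folding into \eqref{LieDII}) and the b.w.\ zero Cartan invariants, not the coefficients $B^i$. Your claim is correct as stated for type {\bf III}, where the b.w.\ zero components of $T$ vanish identically. So the proof of the proposition as stated goes through on the strength of your rank-two argument alone, but the blanket higher-rank claim should be restricted to type {\bf III} or replaced by the observation that the residual type {\bf II} terms only restrict $F$.
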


In fact, using the action \eqref{action} of the Lie derivative of the coframe in the direction of a vector field ${\bf X}$ we may prove the following result:

\begin{prop} \label{prop:IPDtypeII}
For any $\mathcal{I}$-degenerate spacetime which is generic of type {\bf II}, an $IPD$ vector field is necessarily a nil-Killing vector field with respect to $\bl$ of the form \eqref{preipd0}. 
\end{prop}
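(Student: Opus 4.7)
The plan is to begin with a completely general vector field $\bX$ satisfying the $IPD$ condition, retain all nine families of undetermined coefficients in the general expansion \eqref{action}, and show that the $IPD$ hypothesis alone forces the constraints defining \eqref{preipd0}. By Theorem \ref{thm:LorExist} I may work in a null coframe $\{\bn, \bl, \bm^i\}$ in which the Riemann tensor and its covariant derivatives are aligned with $\bl$, so that every rank-2 curvature tensor $R_{ab}$ is of alignment type \textbf{II}, i.e.\ $R_{00} = R_{0i} = 0$.

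The key computation is the frame identity $(\mathcal{L}_\bX R)_{ab} = \bX(R_{ab}) + R_{cb}\Lambda^c{}_a + R_{ac}\Lambda^c{}_b$, where $\Lambda^a{}_b$ is the matrix read off from \eqref{action}. Exactly as in the discussion preceding this proposition, the $IPD$ hypothesis, Corollary~II.11 of \cite{CHP2010}, and the generic type \textbf{II} assumption together force $\bX(R_{01}) = 0$ and $\bX(R_{ij}) = 0$, since these boost weight zero frame components are algebraic functions of the $SPIs$, and further force the boost weight zero components of $\mathcal{L}_\bX R$ to vanish for every admissible $R_{ab}$.

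Writing these components explicitly gives
$$ (\mathcal{L}_\bX R)_{01} = R_{01}(A + \tilde C) + R_{11}\tilde A + R_{k1}\tilde E^k = 0, $$
$$ (\mathcal{L}_\bX R)_{ij} = R_{1j} B_i + R_{1i} B_j + R_{kj} F^k{}_i + R_{ik} F^k{}_j = 0. $$
Since the spacetime is generic of type \textbf{II} the tensor $R_{ab}$ may be varied within the space of rank-2 type \textbf{II} curvature tensors. Selecting $R$ supported in turn on $R_{11}$, $R_{k1}$, $R_{01}$, $R_{1i}$, and on $R_{ij} = \delta_{ij}$ decouples these equations and yields $\tilde A = 0$, $\tilde E^i = 0$, $A + \tilde C = 0$, $B_i = 0$, and $F_{(ij)} = 0$ respectively. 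These are exactly the constraints reducing \eqref{action} to \eqref{preipd0}, so $\mathcal{L}_\bX{\bf g}$ takes the form $T_{11}\bl\bl + 2 T_{1i}\bl\bm^i$, a rank-2 tensor nilpotent with respect to $\bl$, and $\bX$ is a nil-Killing vector field of the claimed form.

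The main obstacle is ensuring that the \emph{generic of type \textbf{II}} hypothesis really provides enough independent rank-2 curvature tensors to realise each of the five test cases above, thereby decoupling $(A + \tilde C)$, $\tilde A$, $\tilde E^k$, $B_i$, and $F_{(ij)}$ as independent unknowns. This is precisely the purpose of the definition introduced immediately before the proposition, and in concrete cases one verifies it using the Ricci tensor, contractions such as $C_{abcd;e} C^{abcd;f}$ built from the Weyl tensor and its covariant derivatives, and analogous higher-order rank-2 curvature polynomials.
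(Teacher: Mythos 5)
Your proposal is correct and follows essentially the same route the paper indicates for this proposition: starting from the general coframe action \eqref{action}, using Corollary~II.11 of \cite{CHP2010} together with the generic type \textbf{II} hypothesis to force the boost weight zero components of $\mathcal{L}_{\bf X} R_{ab}$ to vanish for every admissible rank-two curvature tensor, and then decoupling the resulting linear constraints to recover \eqref{preipd0}. The only difference is that you write out the frame equations and the component-isolation argument explicitly, whereas the paper leaves this computation implicit in the discussion preceding the proposition.
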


\noindent If the $\mathcal{I}$-degenerate spacetime admits curvature tensors of type {\bf III} or {\bf N}, the set of $IPD$ vector fields may not necessarily be contained within the set of nil-Killing vector fields since the action of the Lie derivative in the direction of a vector field ${\bf X}$ on the coframe basis will not give enough b.w. zero components to restrict the form of \eqref{action}.

\end{section}
  
\begin{section}{$\mathcal{I}$-Preserving Diffeomorphisms in the Kundt Spacetimes} \label{sec:IPDKundt}

As the Kundt spacetimes contain a subclass that are $\mathcal{I}$-degenerate, we will study the curvature structure of this subclass to determine conditions on the metric functions in order to admit an additional $IPD$ vector field, ${\bf X}$. The class of Kundt spacetimes are given by the line element:
\beq d s^2=  2 du \left(d v+H(v,u,x^\delta) d u+ W_{\alpha}(v,u,x^\delta )d x^\alpha \right)+ \tilde{g}_{\alpha \beta}(u,x^\delta) dx^\alpha d x^\beta. \label{Kundt} \eeq

\noindent Choosing the initial null coframe, 
\beq \bl = du,~ \bn = dv + Hdu + W_i m^i_{~\alpha} dx^\alpha,~~ \bm^i = m^i_{~\alpha} dx^\alpha, \label{Kundtframe} \eeq
\noindent such that the metric tensor in the line element, $ds^2 = g_{\alpha \beta} dx^\alpha dx^\beta$, takes the form
\beq g_{\alpha \beta} = 2 \ell_{(\alpha} n_{\beta)} + \delta_{ij} m ^i_\alpha m^i_\beta. \nonumber \eeq
\noindent We apply a Lorentz transformation to work with the coframe arising from the Cartan-Karlhede algorithm. While there may be some isotropy remaining from this choice, this will have no effect on the resulting analysis of the b.w. zero components as $\bl$ remains fixed. This will allow us to consider the b.w. zero components of the initial frame with $\bm^i$ adapted to the geometry of the transverse metric.

We note that $\bl = \frac{\partial}{\partial v}$ is a nil-Killing vector field, 
$$ \mathcal{L}_{\bl} {\bf g} =  H_{,v} \bl \bl + 2 W_{i,v} \bm^i \bl,$$
\noindent but it is not necessarily an $IPD$ vector field as the  linearly independent non-zero components of the Riemann tensor with b.w. 1 and 0 are: 

\beq R_{ 1 2 1 i}&=&-\frac 12 W_{ i,vv} \nonumber \\
R_{ 1 2 1 2}&=& -H_{,vv}+\frac 14\left(W_{{i},v}\right)\left(W^{ i,v}\right), \nonumber \\
R_{ 1 2 i j}&=& W_{[ i}W_{ j],vv}+W_{[ i; j],v}, \label{RiemBW0} \\
R_{ 1 i 2 j}&=& \frac 12\left[-W_{ j}W_{ i,vv}+W_{ i; j,v}-\frac 12 \left(W_{  i,v}\right) \left(W_{ j,v}\right)\right], \nonumber \\
R_{ i j kl}&=&\tilde{R}_{ i j k l}. \nonumber \eeq 

\noindent where $\tilde{R}_{ijkl}$ denotes the curvature tensor of the transverse space.

In order for the metric to be degenerate Kundt, it must be of type {\bf II} to all orders.  From \cite{CHPP2009, CHP2010} this occurs if and only if both of the following quantities vanish:

\beq \begin{aligned} I_0 = R^{abcd} R^{~e~f}_{a~c} \mathcal{L}_{\bl} \mathcal{L}_{\bl} g_{bd} \mathcal{L}_{\bl} \mathcal{L}_{\bl} g_{ef},~~ K_{ab} = \mathcal{L}_{\bl} \mathcal{L}_{\bl} \mathcal{L}_{\bl} g_{ab}, \end{aligned} \eeq

\noindent which gives the following conditions on the metric functions $H$ and $W_i$ in \eqref{Kundtframe}:

\beq \begin{aligned} H &= H^{(2)}(u,x^\delta) \frac{v^2}{2} + H^{(1)} (u,x^\delta) v + H^{(0)}(u,x^\delta),  \\  W_i &= W_i^{(1)}(u,x^\delta) v + W_i^{(0)}(u,x^\delta). \end{aligned} \label{degKundt} \eeq

\noindent Looking at the b.w. zero components of the Riemann tensor \eqref{RiemBW0}, it is clear that all of the components are now independent of $v$. However, since $$\mathcal{L}_{\bl} {\bf g} \neq 0$$ this implies that there are components of the Riemann tensor that are dependent on the $v$ coordinate, namely the negative b.w. terms. 

In section \ref{sec:IPDNK} we have shown that the condition that ${\bf X}$ is a nil-Killing vector field is not sufficient to prove it is an $IPD$ vector field. If a degenerate Kundt spacetime admits an $IPD$ vector field, we can deduce its properties from its action on the b.w. zero components treated as Cartan invariants. Supposing there is an additional $IPD$ vector field ${\bf X}$, we can determine conditions from the equations \eqref{RiemBW0} that the functions $H^{(2)}$ and $W^{(1)}_i$ must satisfy at zeroth order: 

\beq \begin{aligned}  & \sigma(u,x^\delta) = H_{,vv} - \frac{1}{4} W_{i,v} W^i_{~,v}, \\ 
& a_{ij}(u,x^\delta) = W_{[i;j],v}, \\
& s_{ij}(u,x^\delta) = W_{(i;j),v} - \frac12 W_{i,v} W_{j,v}, \\
& \mathcal{L}_{\bf X} R_{ijkl} = \mathcal{L}_{\bf X} \sigma = \mathcal{L}_{\bf X} a_{ij} = \mathcal{L}_{\bf X} s_{ij} =0. \end{aligned} \label{IPDeqn0} \eeq

To determine additional conditions we compute the first covariant derivative of the curvature tensor. We note that the connection coefficients cannot contribute positive b.w terms, and the connection coefficients with zero b.w. are independent of $v$, this implies that the b.w. zero components of the covariant derivatives of the Riemann tensor will also be independent of $v$ \cite{CHPP2009}. Thus, $\bl = \frac{\partial}{\partial v}$ will be an $IPD$. 

Due to the form of $R_{ijkl}$ and the connection coefficients for the degenerate Kundt metrics, we can identify a simple condition any $IPD$ vector field must satisfy in terms of the  covariant derivatives of the transverse curvature tensor:

\beq \mathcal{L}_{\bf X} R_{ijkl} = \mathcal{L}_{\bf X} R_{ijkl;i_1} = \mathcal{L}_{\bf X} R_{ijkl; i_1 \cdots i_p } = 0. \nonumber \eeq

\noindent This implies that ${\bf X}$ must be a Killing vector field for the transverse space and so $\mathcal{L}_{\bf X} \Gamma^i_{jk} = 0$ as well. The remaining first order $IPD$ equations are then

\beq \begin{aligned} &\mathcal{L}_{\bf X} \mathcal{L}_{\bm_i} \sigma = \mathcal{L}_{\bf X} \mathcal{L}_{\bm_k} a_{ij} = \mathcal{L}_{\bf X} \mathcal{L}_{\bm_k} s_{ij} = 0, \\
& \alpha_i = R_{121i;2} =  \sigma W_{i,v} - \frac12 (s_{ij} + a_{ij}) W^{j,v}, \\
& \beta_{ijk} = R_{1ijk;2} = W^{l,v} \tilde{R}_{lijk} - W_{i,v} a_{jk} + (s_{i[j}+a_{i[j})W_{k],v}, \\
& \mathcal{L}_{\bf X} \alpha_i = \mathcal{L}_{\bf X} \beta_{ijk} = 0. \end{aligned} \label{IPDeqn1} \eeq

 Higher order covariant derivatives provide additional conditions on the vielbein of the transverse space and the b.w. 0 components of the curvature tensor and its covariant derivatives. If ${\bf X}$ is an $IPD$ vector field and $H \in [\mathcal{R}^q]_{b.w. 0}$,  then 
\beq \mathcal{L}_{[{\bf X}, {\bf m}_i]} H = 0. \nonumber \eeq 
\noindent This provides a consistency condition for the zeroth and first order equations. For any degenerate Kundt spacetime, the action of the Lie derivative of ${\bf X}$ acting on the vielbein takes the form in equation \eqref{LieDII} or \eqref{LieDN} and so the consistency condition implies that the anti-symmetric matrix $F_{ik}$ accounts for elements of the isotropy group of the transverse metric $\tilde{g}_{ij}$. 

In summary, we have the following theorem:
 
\begin{thm} \label{thm:IPDKundt}

For a degenerate Kundt spacetime, if ${\bf X}$ is an $IPD$ vector field, then the metric functions $H^{(2)}$ and $W^{(1)}_i$ must satisfy the first order and second order equations \eqref{IPDeqn0} and \eqref{IPDeqn1} while the transverse space $\tilde{{\bf g}}$ admits ${\bf X}$ as a Killing vector field.  
\end{thm}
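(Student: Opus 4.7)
The plan is to combine the curvature-structure results already established with a direct computation of $\mathcal{L}_{\bX}$ acting on the b.w.\ zero Cartan invariants at zeroth and first order in the covariant derivatives of the Riemann tensor, and then extract the transverse Killing property from the infinite chain of higher-order invariants.

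Since every degenerate Kundt spacetime is $\mathcal{I}$-degenerate and of alignment type \textbf{II} to all orders, Theorem \ref{thm:LorExist} ensures that any $IPD$ vector field $\bX$ annihilates every invariant in $[\mathcal{R}^q]_{b.w.\ 0}$. Assuming the spacetime is generic of type \textbf{II}, Proposition \ref{prop:IPDtypeII} then forces $\bX$ to be nil-Killing with respect to $\bl$ with the Lie derivative of the Cartan-Karlhede adapted coframe taking the form \eqref{LieDII}; in particular $\bX$ preserves the direction of $\bl$ and hence the foliation by constant-$v$ hypersurfaces.

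To obtain \eqref{IPDeqn0}, I substitute the degenerate Kundt form \eqref{degKundt} into the b.w.\ zero Riemann components \eqref{RiemBW0}. The resulting expressions are precisely $\tilde{R}_{ijkl}$, $\sigma$, $a_{ij}$ and $s_{ij}$, all $v$-independent scalars built from $H^{(2)}$, $W_i^{(1)}$ and the transverse metric. As b.w.\ zero Cartan invariants they must be annihilated by $\mathcal{L}_{\bX}$, which yields \eqref{IPDeqn0} directly. To reach \eqref{IPDeqn1}, I next compute the b.w.\ zero components of $R_{abcd;e}$; because the zero-boost-weight connection coefficients of the degenerate Kundt frame are themselves $v$-independent, these components reduce to $\alpha_i$, $\beta_{ijk}$, the transverse covariant derivatives of $\sigma$, $a_{ij}$, $s_{ij}$, and $\tilde{R}_{ijkl;m}$. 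Applying $\mathcal{L}_{\bX}$ and using \eqref{LieDII} then produces \eqref{IPDeqn1} together with $\mathcal{L}_{\bX} \tilde{R}_{ijkl;m}=0$.

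Iterating this procedure to every order $p$ gives $\mathcal{L}_{\bX} \tilde{R}_{ijkl;i_1\cdots i_p}=0$, so the Cartan-Karlhede algorithm applied to the Riemannian transverse space $(\tilde{M},\tilde{{\bf g}})$ implies that the projection of $\bX$ onto the transverse factor is a Killing vector field of $\tilde{{\bf g}}$. The main obstacle will be verifying that this projection is well-defined and that the antisymmetric matrix $F_{ij}$ in \eqref{LieDII} acts as an infinitesimal isometry of the transverse vielbein; this is precisely the content of the consistency condition $\mathcal{L}_{[\bX,\bm_i]} H = 0$ for $H \in [\mathcal{R}^q]_{b.w.\ 0}$ mentioned above, which forces $F_{ij}$ to lie in the linear isotropy group of $\tilde{{\bf g}}$ and closes the argument.
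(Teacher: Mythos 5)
Your proposal follows essentially the same route as the paper: annihilation of the $v$-independent b.w.\ zero Cartan invariants at zeroth and first order yields \eqref{IPDeqn0} and \eqref{IPDeqn1}, the chain $\mathcal{L}_{\bX}\tilde{R}_{ijkl;i_1\cdots i_p}=0$ gives the transverse Killing property, and the consistency condition $\mathcal{L}_{[\bX,\bm_i]}H=0$ handles $F_{ij}$. The only (harmless) deviation is your invocation of Proposition \ref{prop:IPDtypeII} and the genericity-of-type-\textbf{II} hypothesis, which the paper does not need here since the derivation rests solely on $\bX$ annihilating the b.w.\ zero components rather than on the nil-Killing coframe form.
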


\noindent This result is in agreement with theorem II.7 in \cite{CHP2010} which states that all b.w. zero components of the curvature tensor and its covariant derivatives depend on $H^{(2)}$ and $W^{(1)}_i$ alone. That is, we can ignore the lower order $v$ coefficients $H^{(0)}$ and $W^{(0)}_i$ in the metric. Theorem \ref{thm:IPDKundt} leads to the following corollary:

\begin{cor} \label{cor:IPDKundt}
If the degenerate Kundt metric ${\bf g'}$ with $H^{(1)} = H^{(0)} = W^{(0)}_i = 0$ admits an $IPD$ vector field, ${\bf X}$, then any related metric ${\bf g}$ with non-zero $H^{(1)}, H^{(0)}$ or $W^{(0)}_i$ will also admit ${\bf X}$ as an $IPD$ vector field.
\end{cor}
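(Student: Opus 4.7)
The plan is to leverage Theorem \ref{thm:IPDKundt} together with theorem II.7 of \cite{CHP2010} to observe that the defining conditions for ${\bf X}$ to be an $IPD$ of a degenerate Kundt metric form a system of equations whose only inputs are $H^{(2)}$, $W_i^{(1)}$, and the transverse metric $\tilde{\bf g}$---precisely the data shared by ${\bf g}$ and ${\bf g'}$.

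First, I would recall from Theorem \ref{thm:LorExist} and the analysis in Section \ref{sec:IPDKundt} that, because both metrics are $\mathcal{I}$-degenerate, a vector field ${\bf X}$ is an $IPD$ if and only if it annihilates every b.w. zero component of the curvature tensor and its covariant derivatives relative to the Cartan-Karlhede frame. For any degenerate Kundt spacetime, theorem II.7 of \cite{CHP2010} guarantees that these b.w. zero components depend solely on $H^{(2)}$, $W_i^{(1)}$, and $\tilde{\bf g}$, while the lower-order pieces $H^{(1)}$, $H^{(0)}$, $W_i^{(0)}$ enter exclusively into negative b.w. components, which are invisible to $SPIs$.

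Next, I would verify that the explicit $IPD$ equations \eqref{IPDeqn0}, \eqref{IPDeqn1}, together with their higher-order analogues obtained by successive covariant differentiation, inherit this property: every quantity appearing in them (such as $\sigma$, $a_{ij}$, $s_{ij}$, $\alpha_i$, $\beta_{ijk}$ and their Lie derivatives along the spatial frame) is expressible purely in terms of $H^{(2)}$, $W_i^{(1)}$, and $\tilde{\bf g}$; the requirement that ${\bf X}$ descend to a Killing field of $\tilde{\bf g}$ is likewise a purely transverse condition. Theorem \ref{thm:IPDKundt} then asserts that this system is not only necessary but completely characterizes the $IPD$ property for a degenerate Kundt metric.

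The conclusion follows immediately: since ${\bf g}$ and ${\bf g'}$ share the same $H^{(2)}$, $W_i^{(1)}$, and $\tilde{\bf g}$, the $IPD$ system is literally identical for both metrics, so any ${\bf X}$ solving it for ${\bf g'}$ also solves it for ${\bf g}$. The main subtlety I anticipate is confirming that no new constraint enters at arbitrarily high orders of covariant differentiation; this is, however, precisely what the structural result of \cite{CHP2010} guarantees, as all b.w. zero Cartan invariants of a degenerate Kundt spacetime are determined by $H^{(2)}$, $W_i^{(1)}$, and $\tilde{\bf g}$ at every order, closing the argument.
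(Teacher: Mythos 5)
Your proposal is correct and follows essentially the same route as the paper, which derives the corollary directly from Theorem \ref{thm:IPDKundt} together with the remark that, by theorem II.7 of \cite{CHP2010}, all b.w.\ zero components of the curvature tensor and its covariant derivatives depend only on $H^{(2)}$, $W^{(1)}_i$ and the transverse metric, so the $IPD$ system is identical for ${\bf g}$ and ${\bf g'}$. Your explicit check that the quantities in \eqref{IPDeqn0} and \eqref{IPDeqn1} and their higher-order analogues involve only this shared data is precisely the content the paper leaves implicit.
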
 

\noindent For a degenerate Kundt spacetime, if a particular vector field ${\bf X}$ is chosen as an $IPD$ vector field, for example ${\bf \tilde{X}} = \frac{\partial}{\partial u}$, all solutions to the equations \eqref{IPDeqn0} and \eqref{IPDeqn1} for $W^{(1)}_i$ may be difficult to determine. However, a simple solution can always be produced by requiring that 

\beq \tilde{g}_{\alpha \beta}(x^\delta) \text{ and } \mathcal{L}_{{\bf \tilde{X}}} W^{(1)}_i = 0. \nonumber \eeq

\noindent In this case, ${\bf \tilde{X}}$ will be a Killing vector field for the degenerate Kundt metric, ${\bf g'}$. From this observation we have another corollary:

\begin{cor} \label{cor:KVKundt}
If the degenerate Kundt metric ${\bf g'}$ with $H^{(1)} = H^{(0)} = W^{(0)}_i = 0$ admits a Killing vector field ${\bf X}$, i.e., 
\beq \mathcal{L}_{\bf X}{\bf g}' = 0, \label{IPDcrit} \eeq
\noindent then any related metric ${\bf g}$ with non-zero $H^{(1)}, H^{(0)}$ or $W^{(0)}_i$ will admit a nil-Killing $IPD$ vector field. 
\end{cor}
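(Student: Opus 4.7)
The plan is to show that the Killing vector field ${\bf X}$ of ${\bf g}'$ itself serves as a nil-Killing IPD vector field for the enlarged metric ${\bf g}$. The IPD property is immediate from Corollary \ref{cor:IPDKundt}: an isometry trivially annihilates every SPI, so ${\bf X}$ is an IPD vector field for ${\bf g}'$ and, by that corollary, also for ${\bf g}$. It remains to verify the nil-Killing condition $\mathcal{L}_{\bf X} {\bf g} \in S_{\bl}(T_p M, {\bf g})$.

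For this I would exploit linearity of the Lie derivative together with $\mathcal{L}_{\bf X} {\bf g}' = 0$ to write $\mathcal{L}_{\bf X} {\bf g} = \mathcal{L}_{\bf X}({\bf g} - {\bf g}')$. A direct computation using the Kundt coframes of ${\bf g}$ and ${\bf g}'$, which differ only by $\bn - \bn' = (H^{(1)} v + H^{(0)})\bl + W^{(0)}_i \bm^i$, gives
\beq {\bf g} - {\bf g}' = 2(H^{(1)} v + H^{(0)})\, \bl \bl + 2 W^{(0)}_i\, \bl \bm^i, \nonumber \eeq
which is precisely of the nilpotent form \eqref{nilnull} with respect to $\bl$. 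The problem thus reduces to showing that $\mathcal{L}_{\bf X}$ maps $S_{\bl}(T_p M, {\bf g})$ into itself.

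This is exactly the conclusion of Lemma \ref{lem:matthew1}. Its hypotheses are that ${\bf X}$ be nil-Killing with respect to $\bl$, which holds trivially for a Killing vector since $\mathcal{L}_{\bf X} {\bf g}' = 0 \in S_{\bl}(T_p M, {\bf g}')$, and that $[{\bf X}, \bl] \propto \bl$. Establishing the commutator relation is the main obstacle of the proof. I would argue it from the following geometric fact: the null direction $\bl$ is canonical for the degenerate Kundt metric ${\bf g}'$, being the unique (up to rescaling and the remaining isotropy of the Cartan--Karlhede frame) multiply-aligned null direction of the curvature tensor and its covariant derivatives at every order. Since the flow of ${\bf X}$ is an isometry of ${\bf g}'$, it must preserve this canonical direction, so $\phi_{t,*} \bl \propto \bl$ for each $t$, and differentiating at $t=0$ yields $[{\bf X}, \bl] \propto \bl$. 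Applying Lemma \ref{lem:matthew1} then shows that $\mathcal{L}_{\bf X}({\bf g} - {\bf g}')$, and hence $\mathcal{L}_{\bf X} {\bf g}$, lies in $S_{\bl}(T_p M, {\bf g})$, so ${\bf X}$ is the desired nil-Killing IPD vector field for ${\bf g}$.
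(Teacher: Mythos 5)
Your overall architecture is sound and the first two steps are correct: the IPD property does transfer because the set $\mathcal{I}$ depends only on $H^{(2)}$, $W^{(1)}_i$ and $\tilde{{\bf g}}$ (this is the content of Theorem \ref{thm:IPDKundt} and Corollary \ref{cor:IPDKundt}), and your computation ${\bf g}-{\bf g}' = 2(H^{(1)}v+H^{(0)})\,\bl\bl + 2W^{(0)}_i\,\bl\bm^i$ is exactly the nilpotent form \eqref{nilnull}. Reducing the nil-Killing claim to Lemma \ref{lem:matthew1} applied to ${\bf T}={\bf g}-{\bf g}'$ is also a legitimate route, and it differs from the paper, which instead obtains the nil-Killing property from Proposition \ref{prop:IPDtypeII} (for a spacetime generic of type {\bf II}, any IPD vector field is automatically nil-Killing of the form \eqref{preipd0}); your decomposition argument is more explicit and does not need the genericity hypothesis at that step, which is a genuine advantage if the commutator condition can be secured.

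The gap is precisely where you flag it: the justification of $[\bX,\bl]\propto\bl$. You argue that $\bl$ is the \emph{unique} multiply-aligned null direction of ${\bf g}'$ and must therefore be preserved by any isometry. But uniqueness of the aligned direction fails exactly in the degenerate situations to which this corollary is actually applied: the Kundt$^\infty$ triples of Section \ref{sec:KundtCSI} are of type ${\bf D}^k$ (two or more aligned directions), and $VSI$ or locally homogeneous ${\bf g}'$ (e.g.\ flat space, products, symmetric spaces) can have a continuum of them. A concrete failure: take ${\bf g}'$ flat, ${\bf g}$ with $W^{(0)}_i\neq 0$, and $\bX$ a rotation mixing the $v$-direction with a spatial direction; then $\bX$ is Killing for ${\bf g}'$ but $[\bX,\partial_v]\not\propto\partial_v$, and $\mathcal{L}_{\bX}(2W^{(0)}_\alpha\,du\,dx^\alpha)$ acquires a term $W^{(0)}_\alpha\,d(\bX(u))\,dx^\alpha$ with a purely spatial (b.w.\ zero) component, so $\mathcal{L}_{\bX}{\bf g}$ is \emph{not} nilpotent with respect to $\bl$. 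So the argument as written does not prove that an arbitrary Killing vector of ${\bf g}'$ is nil-Killing for ${\bf g}$; it proves it only for those Killing vectors preserving the Kundt congruence. To close the gap you must either (i) restrict to the generic type {\bf II} case where the aligned direction really is unique and invoke that explicitly, or (ii) follow the paper's route through Proposition \ref{prop:IPDtypeII}, or (iii) weaken the conclusion to the existence of \emph{some} nil-Killing IPD generator (e.g.\ by restricting attention to the subalgebra of Killing vectors of ${\bf g}'$ satisfying $[\bX,\bl]\propto\bl$, which still acts transitively in the applications). As stated, your proof establishes the corollary only under an unstated uniqueness hypothesis on the aligned null direction.
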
 
\noindent We note that for some choices of $H^{(1)}, H^{(0)}$ and $W^{(0)}_i$ the vector field ${\bf X}$ will still be a Killing vector field for ${\bf g}$.
\end{section} 

\begin{section}{Kundt-$CSI$ Spacetimes} \label{sec:KundtCSI}

A Kundt-$CSI$ spacetime is a degenerate Kundt spacetime where the transverse space $\tilde{{\bf g}}$ is a locally homogeneous Riemannian manifold and the metric functions $H^{(2)}$ and $W^{(1)}$ satisfy the equations \eqref{IPDeqn0} and \eqref{IPDeqn1} with $\sigma, a_{ij}, s_{ij}, \alpha_i$ and $\beta_{ijk}$ constant. In 3D and 4D any $CSI$ spacetime is either locally homogeneous or belongs to the Kundt-$CSI$ class \cite{CSI4c, CSI4b} while in higher dimensions the non-flat $VSI$ spacetimes are a subset of the Kundt-$CSI$ spacetimes \cite{Higher} and it is conjectured that all higher dimensional $CSI$ spacetimes are either locally homogeneous or belong to the Kundt-$CSI$ class as well.

Using Proposition \ref{prop:IPDtypeII} and Corollary \ref{cor:KVKundt} we are able to confirm the conjecture for Kundt-$CSI$ spacetimes in \cite{SH2018}:

\begin{con}
Assume that a $D$-dimensional spacetime has all constant
curvature invariants ($CSI$). Then there exists a set $N$ of nil-Killing vector fields which is transitive; i.e., $dim(N|p) = D$ for all $p \in M$.
\end{con}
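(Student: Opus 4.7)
The strategy is to split into two cases along the Kundt-$CSI$ dichotomy reviewed at the start of this section. If the spacetime is locally homogeneous then it admits $D$ pointwise-linearly-independent Killing vector fields, and any Killing vector field is trivially nil-Killing since $\mathcal{L}_{\bf X}{\bf g} = 0 \in S_{\bl}(T_p M,{\bf g})$ for any null $\bl$. This already yields a transitive set $N$, so the substantive work lies in the remaining case of a Kundt-$CSI$ spacetime that is not locally homogeneous.

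For such a ${\bf g}$ in the form \eqref{Kundt}-\eqref{degKundt}, the plan is to show that the reduced metric ${\bf g}'$ obtained by setting $H^{(1)}=H^{(0)}=W^{(0)}_i=0$ is locally homogeneous, and hence admits $D$ pointwise-linearly-independent Killing vector fields $\{{\bf X}_a\}_{a=1}^D$. By Corollary \ref{cor:KVKundt}, each ${\bf X}_a$ then lifts to a nil-Killing $IPD$ vector field of ${\bf g}$ (nil-Killing with respect to $\bl=\partial_v$), and since the underlying manifold is unchanged, $\{{\bf X}_a\}$ remains pointwise linearly independent on ${\bf g}$, yielding the required transitive set $N$.

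Thus everything reduces to showing that ${\bf g}'$ is locally homogeneous. This is where the Kundt$^\infty$ triple construction for Kundt-$CSI$ spacetimes \cite{SHY2015, SM2018} enters: any Kundt-$CSI$ metric admits a representative of reduced form ${\bf g}'$ which is locally homogeneous. The key inputs are Theorem II.7 of \cite{CHP2010}, which guarantees that all b.w. 0 components of the curvature tensor of ${\bf g}'$ depend only on $H^{(2)}$ and $W^{(1)}_i$, and the $CSI$ hypothesis, which via \eqref{IPDeqn0}-\eqref{IPDeqn1} forces $\sigma, a_{ij}, s_{ij}, \alpha_i, \beta_{ijk}$ and all higher-order analogues to be constant. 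Explicitly, one expects the $D$ Killing vector fields of ${\bf g}'$ to arise as the $\geq D-2$ Killing vector fields of the locally homogeneous transverse space $(\tilde{M},\tilde{{\bf g}})$ extended trivially in $u$ and $v$, together with one or two additional Killing vector fields along the $u,v$ directions arising from a scaling or translation symmetry of ${\bf g}'$ (for instance a boost $v\partial_v - u\partial_u$ or a $u$-translation after gauge fixing).

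The main obstacle is the construction of the additional symmetries complementing the transverse ones, i.e., showing that the $u$-dependence of $H^{(2)}$ and $W^{(1)}_i$ permitted by the $CSI$ conditions can always be absorbed by a coordinate change or generated by an additional Killing vector field of ${\bf g}'$. This is precisely the technical content of the Kundt$^\infty$ triple; given it, the proof terminates by assembling the $D$ Killing vector fields of ${\bf g}'$ and invoking Corollary \ref{cor:KVKundt} to obtain the required transitive set $N$ of nil-Killing $IPD$ vector fields on ${\bf g}$.
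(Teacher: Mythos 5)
Your proposal follows essentially the same route as the paper: pass to the locally homogeneous Kundt$^\infty$ triple of \cite{SHY2015, SM2018}, whose transitive set of Killing vector fields is then transferred back to the original metric as nil-Killing $IPD$ vector fields via Proposition \ref{prop:IPDtypeII} and Corollary \ref{cor:KVKundt}. The one imprecision is that the locally homogeneous companion is not merely ${\bf g'}$ with $H^{(1)}=H^{(0)}=W^{(0)}_i=0$ but the boost limit $\lim_{t\to\infty}\phi_t^*{\bf g}$, which also freezes the $u$-dependence of $H^{(2)}$, $W^{(1)}_i$ and $\tilde{{\bf g}}$ at $u_0$; the paper then obtains local homogeneity from the type ${\bf D}^k$ property (the constant $SPIs$ determine all Cartan invariants) rather than from your heuristic of absorbing the $u$-dependence into coordinate changes or extra Killing vector fields.
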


Given a  Kundt-$CSI$ spacetime, we may use a  diffeomorphism $\phi_t$ with respect to a point $p$ generated by the boost defined in \cite{SHY2015} and take the limit $\lim_{t \to \infty} \phi_t^* {\bf g} = {\bf g'}$ to produce a locally homogeneous Kundt-$CSI$ spacetime known as a {\it Kundt$^\infty$ triple} with non-zero metric functions \eqref{degKundt} of the form: 

\beq (H, {\bf W}, {\bf \tilde{g}}) = (H^{(2)}(u_0,x^\delta), W^{(1)}_i(u_0,x^\delta), \tilde{g}_{\gamma \epsilon}(u_0, x^\delta) dx^\gamma dx^\epsilon). \eeq

\noindent Thus, a corresponding locally homogeneous  Kundt$^\infty$  triple which is generic of type {\bf D} can always be generated with an identical set of constant $SPIs$ as the original Kundt-$CSI$ spacetime.

In fact, the locally homogeneous Kundt$^\infty$  triples are of alignment type ${\bf D}^k$, i.e., the curvature tensor and its covariant derivatives are of type ${\bf D}$ to all orders \cite{SM2018}. As the $SPIs$ fully determine the Cartan invariants of a type ${\bf D}^k$ spacetime \cite{CHPP2009, SM2018,CHP2010}, the Cartan invariants must be constant, ensuring the existence of a fully transitive set of Killing vector fields. Thus, for any Kundt-$CSI$ spacetime, corollary \ref{cor:KVKundt} and proposition \ref{prop:IPDtypeII} give the following proposition:

\begin{prop}
For any Kundt-$CSI$ spacetime, the Killing vector fields of the corresponding Kundt$^\infty$ triple act as a finite transitive set of nil-Killing $IPD$ vector fields for the original spacetime.
\end{prop}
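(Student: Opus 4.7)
The plan is to assemble three previously established ingredients: the Kundt$^\infty$ construction, which produces a locally homogeneous Kundt spacetime $\mathbf{g}'$ from the Kundt-$CSI$ metric $\mathbf{g}$ with $\mathbf{g}'$ of alignment type $\mathbf{D}^k$ and $H^{(1)}=H^{(0)}=W^{(0)}_i=0$; Corollary \ref{cor:KVKundt}, which promotes any Killing vector field of such a $\mathbf{g}'$ to a nil-Killing $IPD$ vector field of any related metric obtained by reinstating $H^{(1)}, H^{(0)}, W^{(0)}_i$; and Proposition \ref{prop:IPDtypeII}, which confirms the $IPD$ character in the generic type $\mathbf{II}$ setting.

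First I would construct $\mathbf{g}'$ as the boost-diffeomorphism limit $\lim_{t\to\infty}\phi_t^*\mathbf{g}$, so that $\mathbf{g}$ and $\mathbf{g}'$ share the same set $\mathcal{I}$. Since $\mathcal{I}$ consists of constants by the $CSI$ hypothesis and $\mathbf{g}'$ is of alignment type $\mathbf{D}^k$, the $SPIs$ fully determine all Cartan invariants of $\mathbf{g}'$, which are therefore themselves all constant. The Cartan-Karlhede dimension formula $\dim(G) = D - I_q + \dim(H_q)$ then forces $\dim(G)\ge D$ on $\mathbf{g}'$, delivering a transitive Killing algebra; its finite dimension is secured by the universal bound $\dim(G) \le D(D+1)/2$ for the isometry algebra of a pseudo-Riemannian manifold.

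Next I would apply Corollary \ref{cor:KVKundt} to each Killing vector field $\mathbf{X}$ of $\mathbf{g}'$: since $\mathbf{g}$ differs from $\mathbf{g}'$ only by the reinstated lower-order functions $H^{(1)}, H^{(0)}, W^{(0)}_i$, each such $\mathbf{X}$ descends to a nil-Killing $IPD$ vector field on $\mathbf{g}$, and Proposition \ref{prop:IPDtypeII} confirms that this is the necessary form of any generator preserving $\mathcal{I}$ in the generic type $\mathbf{II}$ setting. Because the collection of $\mathbf{X}$'s spans the tangent space at every point of the shared coordinate chart and inherits its finite dimension from the Killing algebra of $\mathbf{g}'$, the resulting set of nil-Killing $IPD$ vector fields on $\mathbf{g}$ is itself finite and transitive.

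The main obstacle I anticipate is reconciling the $u$-freezing inherent in the Kundt$^\infty$ construction (which evaluates $H^{(2)}, W^{(1)}_i$ and $\tilde g_{\gamma\epsilon}$ at a fixed $u_0$) with the hypothesis in Corollary \ref{cor:KVKundt} that $\mathbf{g}$ and $\mathbf{g}'$ share exactly these functions. The resolution should exploit the explicit action of $\phi_t$, which only rescales $v$ together with the $v$-linear and $v$-quadratic coefficients: any residual $u$-dependence of $H^{(2)}, W^{(1)}_i$ in the original metric must be compatible with the zeroth- and first-order $IPD$ equations \eqref{IPDeqn0}--\eqref{IPDeqn1} via Theorem \ref{thm:IPDKundt}, and can then be absorbed into the lower-order terms whose variation Corollary \ref{cor:KVKundt} already permits, so that the Killing symmetries of the homogeneous model genuinely descend as nil-Killing $IPD$ vector fields on the original inhomogeneous metric.
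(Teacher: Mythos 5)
Your proposal is correct and follows essentially the same route as the paper: take the boost-diffeomorphism limit to the locally homogeneous Kundt$^\infty$ triple, use its type ${\bf D}^k$ structure to conclude that the constant $SPIs$ force constant Cartan invariants and hence a (finite) transitive Killing algebra, and then invoke Corollary \ref{cor:KVKundt} together with Proposition \ref{prop:IPDtypeII} to descend these Killing vector fields to nil-Killing $IPD$ vector fields of the original metric. The obstacle you flag at the end --- the $u_0$-freezing of $H^{(2)}$, $W^{(1)}_i$ and $\tilde{g}$ in the Kundt$^\infty$ limit versus the shared-function hypothesis of Corollary \ref{cor:KVKundt} --- is genuine and is also left implicit in the paper, but the resolution is not that residual $u$-dependence of $H^{(2)}$ can be ``absorbed into the lower-order terms'' (a $v^2$ coefficient cannot be moved into $H^{(1)}$ or $H^{(0)}$); rather, for a $CSI$ metric the boost-weight-zero quantities appearing in \eqref{IPDeqn0} and \eqref{IPDeqn1} are constants, so the conditions of Theorem \ref{thm:IPDKundt} are satisfied automatically and the descent of the Killing vector fields goes through.
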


\end{section}

\begin{section}{Discussion and Future Work} \label{sec:conclusion}

In this paper we have examined the general form of the nilpotent operators and introduced a new definition for the nil-Killing vector fields. Using this definition we have shown that the nil-Killing vector fields, which generalize the Kerr-Schild vector fields, form a Lie algebra. We have also argued that other Lie algebras can be formed by imposing additional conditions on the nil-Killing vector fields. Since the existence of a nil-Killing vector field does not ensure that it will be an $IPD$ vector field, we then studied the existence of $IPD$ vector fields using a frame based approach. By considering the form of the curvature tensor and its covariant derivatives arising from the Cartan-Karlhede algorithm we have determined the dimension of the subset of the tangent space spanned by the $IPD$ vector fields.

Employing the stronger definition of a nil-Killing vector field and the action of the Lie derivative of the nil-Killing vector fields on the coframe, we have shown that the set of nil-Killing vector fields contain $IPD$ vector fields. Furthermore we proved that for a spacetime which is generic of type {\bf II} to all orders, the $IPD$ vector fields are strictly contained in the set of nil-Killing vector fields. In the case of spacetimes which are generic of type {\bf III} or {\bf N} we are unable to show that an $IPD$ vector field is necessarily a nil-Killing vector field, and so it is possible that such algebraically special  $\mathcal{I}$-degenerate spacetimes can admit $IPD$ vector fields which are {\it not} nil-Killing.

To determine the existence of an $IPD$ vector field in a general degenerate Kundt spacetime we have proposed a constructive approach by assuming that an $IPD$ vector field is given and determining the form of the metric functions. The existence of an $IPD$ vector field influences the form of the transverse space, $\tilde{{\bf g}}$, and the metric functions $H^{(2)}$  and $W^{(1)}_i$. Any metric sharing these functions with differing $H^{(1)}$, $H^{(0)}$  and $W^{(0)}_i$ will admit the same $IPD$ vector field. 

From this result, we have demonstrated that a transitive set of nil-Killing vector fields exist for any Kundt-$CSI$ spacetime. This was achieved using a mapping from an arbitrary Kundt-$CSI$ spacetime to a unique Kundt-$CSI$ spacetime of alignment type ${\bf D}^k$ with identical $SPIs$, $\mathcal{I}$, as the original spacetime but whose Cartan invariants are characterized by the set $\mathcal{I}$. Such spacetimes admit a transitive set of Killing vector fields, i.e., they are locally homogeneous, and Theorem \ref{thm:IPDKundt} implies that these vector fields are nil-Killing $IPD$ vector fields for the original Kundt-$CSI$ spacetime. 

Admittedly, this mapping will not work for other $CSI$ pseudo-Riemannian spaces of indefinite signature, as it may yield  spaces which are not locally homogeneous but are $CSI$. We hope the frame approach introduced here will be helpful in identifying $IPD$-vector fields for pseudo-Riemannian spaces. However, this task is complicated by the higher dimensional b.w. structure of the pseudo-Riemannian spaces. To illustrate the issue, we will consider an $(2k+m)$-dimensional manifold of signature $(k, k+m)$, a null coframe can be chosen such that 
\beq ds^2 = 2(\bl^1 {\bf n}^1 + \bl^2 {\bf n}^2 + \hdots + \bl^k {\bf n}^k) + \delta_{ab} {\bf m}^a {\bf m}^b,~~ a,b \in [1, m]. \label{badnull} \eeq
\noindent Relative to this coframe, the Abelian subgroup of the group $SO(k, k+m)$ are boosts in each of the $k$ null planes: 
\beq (\bl^i, {\bf n}^i) \to (e^{\lambda_i} \bl_i, e^{-\lambda_i} {\bf n}_i) \label{badboost} \eeq
\noindent for $i \in [1,k]$ where $\lambda_i$ are real-valued. 
In analogy with the Lorentzian case, we have the concept of {\it boost weights} ${\bf b} \in \mathbb{Z}^k$ such that for an arbitrary component of a rank $n$ tensor ${\bf T}$ with respect to the coframe \eqref{badnull}, a boost in each of the $k$ null planes gives the transformation 
\beq T_{\mu_1 \hdots \mu_n} \to e^{(b_1 \lambda_1 + b_2 \lambda_2 + \hdots + b_k \lambda_k)} T_{\mu_1 \hdots \mu_n}, \nonumber \eeq
\noindent where $b_1, \hdots, b_k$ are integers and ${\bf b} = (b_1, b_2, \hdots, b_k)$ is the boost weight vector of the component $T_{\mu_1 \hdots \mu_n}$. We can decompose the tensor ${\bf T}$ into the following decomposition
\beq {\bf T} = \sum\limits_{{\bf b} \in \mathbb{Z}^k} (T)_{{\bf b}}. \eeq
\noindent Here, $(T)_{{\bf b}}$ denotes the projection onto the subspace of components of boost weight {\bf b}. 

With the boost weight decomposition, we can introduce properties to classify tensors in a similar manner to the alignment classification: 
\begin{defn} \label{defn:Sprop}
Consider the conditions 
\beq \begin{aligned} B1)~ & (T)_{{\bf b}} =0, \text{ for all } {\bf b} = (b_1, b_2, \hdots, b_k), b_1 >0, \\
B2)~ & (T)_{{\bf b}} =0, \text{ for all } {\bf b} = (0, b_2, \hdots, b_k), b_2 >0, \\ 
& \vdots \\
Bk)~ & (T)_{{\bf b}} =0, \text{ for all } {\bf b} = (0, 0, \hdots, 0, b_k), b_k >0, \end{aligned} \eeq
\noindent A tensor ${\bf T}$ possesses the ${\bf S}_i$ property, $i \in [1,k]$, if there exists a null coframe such that the conditions $B1)-Bi)$ holds. 
\end{defn}

\begin{defn} \label{defn:Nprop}
A tensor ${\bf T}$ posses the ${\bf N}$ property if a null coframe exists such that $B1)-Bk)$ are satisfied and $$ (T)_{{\bf b}} =0, \text{ for all } {\bf b} = (0,0, \hdots 0,0).$$
\end{defn}

For indefinite signatures other than Lorentzian signature there is another set of properties that must be considered. A tensor which does not have the $S_i$ property can still have a degenerate structure, since the boost weights are a lattice ${\bf b} \in \mathbb{Z}^k \subset \mathbb{R}^k$, we can use a linear transformation ${\bf G} \in GL(k)$ to map the boost weight onto a lattice $\Gamma$ in $\mathbb{R}^k$. If such a  ${\bf G} \in GL(k)$ exists such that the image of the boost weights ${\bf G} {\bf b}$ of the tensor {\bf T} now satisfies some of the properties above, we say the tensor {\bf T} possesses the ${\bf S}_i^{\bf G}$ or ${\bf N}^{\bf G}$ property.

Any tensor satisfying {\it at least} the $S_1^{\bf G}$ property will not be characterized by its invariants \cite{SHY2015}. Thus, a given pseudo-Riemannian space is $\mathcal{I}$-degenerate if the curvature tensor and its covariant derivatives satisfy the $S_1^{\bf G}$ property relative to a common null coframe. As in the Lorentzian case \cite{SM2018}, the proof of this result relies on the limit of a diffeomorphism associated with an appropriately chosen boost in order to generate a non-diffeomorphic space with the same set $\mathcal{I}$. Motivated by this result, and theorem \ref{thm:LorExist} we can state a simple existence theorem for $IPD$ vector fields in pseudo-Riemannian spaces:

\begin{thm} \label{thm:PRexist}
Consider a pseudo-Riemannian space, for which the curvature tensor and its covariant derivatives satisfies the $S_1^{\bf G}$  property with ${\bf G} \in GL(k)$ relative to a fixed coframe basis. Denoting $\mathcal{R}_{\bf G}$ as the ${\bf G}$-transformed components of the curvature tensor and its covariant derivatives, if 
\beq 0 \leq rank([\mathcal{R}^q_{\bf G}]_{b_i = 0}) < rank(\mathcal{R}^q_{\bf G}), \nonumber \eeq
\noindent then the manifold admits a non-trivial $IPD$ vector field, ${\bf X}$, such that 
\beq \mathcal{L}_{\bf X} \mathcal{I} = 0. \nonumber \eeq
\noindent That is, the pseudo-Riemannian space is $\mathcal{I}$-degenerate.
\end{thm}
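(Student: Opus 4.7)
The plan is to mirror the argument used in Theorem \ref{thm:LorExist}, but with the alignment type {\bf II} condition replaced by the $S_1^{\bf G}$ property and a careful accounting of how the higher-dimensional boost-weight lattice interacts with complete contractions. The strategy has two halves: first, show that under the $S_1^{\bf G}$ hypothesis any $SPI$ is fully determined by the $[\mathcal{R}^q_{\bf G}]_{b_i = 0}$ components, so a rank deficit forces the existence of a non-trivial $IPD$; second, check that the converse direction follows by the same observation.

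First I would establish the pseudo-Riemannian analogue of Corollary II.11 of \cite{CHP2010}, suitably transported through the lattice transformation ${\bf G}$. Because ${\bf G} \in GL(k)$ is a linear reshuffling of the boost weights, any complete contraction of the curvature tensor and its covariant derivatives that yields an $SPI$ must be invariant under all simultaneous boosts in the $k$ null planes. Thus the surviving contributions must have total boost weight zero along every independent boost direction, which after the ${\bf G}$-transformation is precisely the locus $b_i=0$. Combined with the $S_1^{\bf G}$ hypothesis, which asserts that positive-${\bf G}$-weight components vanish relative to a common coframe, this forces any $SPI$ to be expressible as a polynomial in the components of $[\mathcal{R}^q_{\bf G}]_{b_i=0}$ only.

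Next I would use the Cartan--Karlhede framework as in the proof of Theorem \ref{thm:LorExist}. By the rank inequality there exists a Cartan invariant in $\mathcal{R}^q_{\bf G}$ that is functionally independent of $[\mathcal{R}^q_{\bf G}]_{b_i=0}$, so one can choose local coordinates in which at least one coordinate $x$ appears in the Cartan invariants but cannot be written as a function of $SPIs$. The vector field ${\bf X} = \partial/\partial x$ then annihilates every element of $\mathcal{I}$ by the previous paragraph, and it is non-trivial because it moves a functionally independent Cartan invariant, so it cannot be a Killing field. The converse direction (an $IPD$ forces the rank inequality and the $S_1^{\bf G}$ property) proceeds exactly as in the Lorentzian case by invoking the pseudo-Riemannian version of Corollary 3.4 of \cite{Hervik2011} mentioned immediately before the theorem, which rules out any tensor failing $S_1^{\bf G}$ in an $\mathcal{I}$-degenerate space.

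The main obstacle will be the first step: a clean justification that the $SPI$-to-$b_i=0$ reduction survives the passage from the integer lattice $\mathbb{Z}^k$ of boost weights to the ${\bf G}$-transformed lattice $\Gamma \subset \mathbb{R}^k$. Unlike the Lorentzian setting, where a single boost orders components by a scalar weight, here one must argue that the simultaneous invariance under all $k$ boosts still collapses any scalar contraction onto the intersection of the kernels $b_i = 0$ for all $i$, and that no ${\bf G}$-twisting of the weight lattice produces spurious invariant contractions from off-lattice positions. Once that technical point is in hand, the rest of the proof is a direct translation of Theorem \ref{thm:LorExist} to the multi-boost setting.
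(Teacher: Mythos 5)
The paper itself offers no proof of Theorem \ref{thm:PRexist}; it is stated as ``motivated by'' the preceding discussion and Theorem \ref{thm:LorExist}, so your strategy of transplanting that proof is exactly the route the authors intend. The coordinate argument in your second half is fine: once you know every $SPI$ is a function of a set of Cartan invariants of strictly smaller rank, you can adapt coordinates so that some $x$ appears in a functionally independent Cartan invariant but in no $SPI$, and ${\bf X}=\partial/\partial x$ is a non-trivial $IPD$ generator.

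The genuine gap is the one you flag yourself but do not close, and it is not merely technical. Boost invariance of a complete contraction forces each \emph{monomial} to have total boost weight zero in every one of the $k$ null planes; it does not force each \emph{factor} to have all weights zero. The $S_1^{\bf G}$ hypothesis only kills components with $b_1>0$, so from ``the $b_1$-weights of the factors are each $\le 0$ and sum to $0$'' you may conclude that every surviving factor has $b_1=0$ --- but nothing constrains its remaining weights $b_2,\dots,b_k$. Concretely, the product of a component of weight $(0,1,0,\dots,0)$ with one of weight $(0,-1,0,\dots,0)$ is boost invariant, survives $S_1^{\bf G}$, and contributes to an $SPI$ without lying in the locus where all $b_i$ vanish. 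Hence your claimed reduction ``any $SPI$ is a polynomial in $[\mathcal{R}^q_{\bf G}]_{b_i=0}$'' is false if $[\mathcal{R}^q_{\bf G}]_{b_i=0}$ means the components with \emph{all} transformed weights zero; what $S_1^{\bf G}$ actually yields is that the $SPI$s are polynomials in the components with $b_1=0$ only. The theorem's notation is ambiguous on this point, and your proof only closes if you read $[\mathcal{R}^q_{\bf G}]_{b_i=0}$ as the $b_1=0$ slice (the correct analogue of Corollary II.11 of \cite{CHP2010} under $S_1$ alone); collapsing further to the full zero-weight lattice point would require the stronger $S_k^{\bf G}$-type hypotheses of Definition \ref{defn:Sprop}, which the theorem does not assume. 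Separately, the ``converse direction'' you sketch is not part of the statement (the theorem is one-directional), so that paragraph can be dropped.
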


In principle the set of $IPD$ vector fields can be determined using this approach. However, in practice this is too difficult to compute for a generic pseudo-Riemannian manifold due to the $S_i^{\bf G}$ property. As an alternative, theorem \ref{thm:PRexist} can be restated in terms of differential invariants \cite{kruglikov2016global} by comparing the rank of $\mathcal{I}$ to the rank of the set of differential invariants. In future work, we will explore alternative approaches to finding a transitive set of  nil-Killing $IPD$ vector fields for spacetimes using the geometric evolution equations \cite{MTA2018} with the goal of extending the approach to pseudo-Riemannian spaces of other signatures. 

\end{section}

\begin{section}*{Acknowledgements}

We would like to thank Sigbj{\o}rn Hervik for helpful discussions during the course of this project. This work was supported through the Research Council of Norway, Toppforsk grant no. 250367: Pseudo-
Riemannian Geometry and Polynomial Curvature Invariants: Classification, Characterisation and Applications.

\end{section}

\bibliographystyle{unsrt-phys}
\bibliography{IPDReferences}

\end{document}